\documentclass[a4paper,11pt,reqno]{amsart}

\usepackage[dvipsnames,x11names]{xcolor}
\usepackage{hyperref}       
\usepackage{url}            
\usepackage{amsthm}
\usepackage{amsfonts}
\usepackage{amsmath}
\usepackage{amssymb}
\usepackage{color}
\usepackage{multirow}
\usepackage{float}
\usepackage[noadjust]{cite}
\usepackage{mathtools}
\usepackage{enumitem}
\usepackage{makecell}
\usepackage{array}
\newcolumntype{?}{!{\vrule width 1pt}}
\usepackage{tikz,pgfplots}
\pgfplotsset{compat=newest}
\usepackage{tikz}
\usetikzlibrary{arrows.meta}
\usetikzlibrary{positioning}
\usetikzlibrary{chains}
\usepackage{caption}

\date{}

\newcommand\giu[1]{{{\textcolor{red}{#1}}}}

\usepackage{bm,authblk}
\def\fa{\bm{f}}

\usepackage[margin=2.5cm]{geometry}

\theoremstyle{definition}

\newtheorem{theorem}{Theorem}[section]
\newtheorem{corollary}[theorem]{Corollary}
\newtheorem{lemma}[theorem]{Lemma}
\newtheorem{definition}[theorem]{Definition}
\newtheorem{proposition}[theorem]{Proposition}
\newtheorem{remark}[theorem]{Remark}
\newtheorem{notation}[theorem]{Notation}

\newtheorem{example}[theorem]{Example}

\def\mS{\mathcal{S}}

\def\mG{\mathcal{G}}
\def\mA{\mathcal{A}}

\def\N{\mathbb{N}}

\newcommand{\Bin}[2]{\binom{#1}{#2}_\R}

\newcommand{\Ppq}{\mathbb{P}_{p,q}}
\newcommand{\PP}{\mathbb{P}}
\newcommand{\PPn}{\mathbb{P}^n}

\newcommand{\K}{\mathcal{K}}
\def\F{\mathbb{F}_2^n}
\def\R{\mathbb{R}}
\def\<{\left<}
\def\>{\right>}
\def\l{\left(}
\def\r{\right)}
\def\FF{\mathbb{F}_2}
\def\wH{\omega^{\textnormal{H}}}
\def\dH{d^{\textnormal{H}}}

\def\HH{{\textnormal{H}}}
\def\WH{W^{\textnormal{H}}}

\def\ln{\textup{ln}}

\usepackage{amsaddr}

\usepackage{setspace}
\setstretch{1.05}
\setlength{\headsep}{9mm}

\title{Parameters of Codes for the Binary Asymmetric Channel}

\author{Giuseppe Cotardo$^*$}
\thanks{$^*$ The author is supported by the Irish Research Council through grant n. GOIPG/2018/2534}

\address{School of Mathematics and Statistics \\ University College Dublin, Ireland}

\author{Alberto Ravagnani$^{**}$}
\thanks{$^{**}$ The author is supported by the Dutch Research Council through grants n. OCEANW.KLEIN.539 and n. VI.Vidi.203.045.}

\address{Department of Mathematics and Computer Science \\
Eindhoven University of Technology, 
the Netherlands}

\keywords{Binary asymmetric channel, discrepancy, combinatorial neural code, unsuccessful decoding, bound.}

\usepackage{verbatim}

\usepackage{cancel}

\begin{document}
\maketitle

\begin{abstract}
We introduce two notions of discrepancy between binary vectors, which are not metric functions in general but nonetheless capture the mathematical structure of 
the binary asymmetric channel. 
In turn, these lead to two new fundamental parameters 
of binary error-correcting codes, both of which 
measure the probability that the maximum likelihood decoder fails. We then derive 
various bounds for the cardinality and weight distribution of a binary code in terms of these new parameters, giving examples of codes meeting the bounds with equality.
\end{abstract}

\bigskip

\section*{Introduction}

In~\cite{curto2013combinatorial}, the authors initiate the study of coding theory in connection with neuroscience. They take a new look at neural coding from a mathematical perspective,
discretizing \textit{receptive fields} and modeling them as binary codes $C \subseteq \{0,1\}^n$. Under this approach, each of the $n$ coordinates of a vector represents a neuron, which is ``on'' if the corresponding coordinate is a~$1$, and ``off'' otherwise.

In the context of discretized receptive field codes, it is natural to assume that the  failure  of a neuron to be ``on'' is more probable than a neuron being ``on'' when it should not (we refer to \cite{curto2013combinatorial} for an explanation of the model). This behaviour is captured by the \textit{binary asymmetric channel}, in which $0$ becomes $1$ with probability~$p$ and~$1$ becomes~$0$ with probability $q \ge p$.
The binary asymmetric channel is in fact a family of channels, which comprises both the \textit{binary symmetric channel} (obtained for $p=q$) and the Z-channel (obtained for $p=0$).

In this work, we study the mathematical properties of the binary asymmetric channel and the structure of binary codes in connection with it. Very interesting progress has been recently made in this context, which we now briefly summarize.
In 2016 it was proved in~\cite{poplawski2016matched} that the binary asymmetric channel is metrizable in the weak sense of Massey~\cite{massey1967notes}. Shortly after that, Qureshi showed that the binary asymmetric channel admits a matched metric~\cite{qureshi2018matched}, establishing a conjecture of Firer and Walker~\cite{firer2015matched}. The approach of~\cite{qureshi2018matched} develops criteria  (and algorithms) to  understand  when  a  discrete  memoryless  channel  admits a  metric for which the  maximum  likelihood  decoder  coincides  with  the  nearest  neighbour decoder. A current open problem is to explicitly describe the metrics matching the binary asymmetric channel.

\bigskip

\subsection*{Our contribution.} \ In this paper, we bring forward the theory of  binary codes for the asymmetric channel, with a focus on their structural properties and parameters. We introduce two notions of 
\textit{discrepancy} between binary vectors $x,y \in \F$, which we denote by $\delta(x,y)$ and~$\hat\delta(x,y)$ respectively.
Although these are not metrics in general,
they are relatively simple functions that nicely relate to the probability law defining the binary asymmetric channel. In particular, as we will see,~$\delta$  matches such a channel (in the sense of our Theorem~\ref{th:decoders}). We also establish some general properties of~$\delta$ and~$\hat\delta$, showing the connection between these and the more traditional Hamming distance.
The function $\delta$ is not symmetric but satisfies the triangular inequality in general. In contrast,~$\hat\delta$ is symmetric but does not satisfy the triangular inequality.

Each of the two discrepancy functions mentioned above naturally defines a fundamental parameter of a code for the binary asymmetric channel. We call these \textit{minimum discrepancy} and \textit{minimum symmetric discrepancy}. Moreover, we show that they give incomparable upper bounds for the probability that the maximum likelihood decoder fails.

We devote the second part of the paper to 
bounds on the size of binary codes
having prescribed minimum (symmetric) discrepancy. We show how some first bounds can be  obtained from the theory of block codes endowed with the Hamming distance.
We then argue why techniques from classical coding theory do not easily extend to the discrepancy setting, showing that the graphs that naturally arise in the latter context do not have the same regularity as the corresponding ``Hamming-metric'' graphs. 
Finally, we derive upper and lower bounds for the size of binary codes using combinatorial arguments. The most interesting of these involves a code parameter that is finer than the cardinality, namely, the \textit{weight distribution}.  We illustrate how to apply the various bounds and provide examples of codes meeting them with equality.

\section{Preliminaries and Notation}
\label{sec:prel}

Throughout the paper, $\mathbb{F}_2=\{0,1\}$ is the binary field, $\N=\{0,1,2,\ldots\}$ is the set of natural numbers, and $n \ge 2$ is an integer. 
 We study a family of channels indexed by a pair of real numbers $(p,q)$ in the interval $[0,1/2)$.  These are defined as follows.

\begin{definition}
 Let $0\leq p\leq q< 1/2$ be real numbers and let
 \begin{equation*}
		\Ppq(1 \mid 0):=p, \qquad \Ppq(0 \mid 0) := 1-p, \qquad \Ppq(0 \mid 1):=q, \qquad \Ppq(1 \mid 1):=1-q.
	\end{equation*}
The \textbf{binary asymmetric channel} associated with $(n,p,q)$ is the triple $\K^n=(\F,\F,\Ppq^n)$, where $\Ppq^n:\F\times \F\longrightarrow\R$ is the function defined by 
\begin{equation}
    \label{eq:Prob}
    \Ppq^n(y\mid x)=\prod_{i=1}^n \Ppq(y_i\mid x_i) \quad \mbox{for all $x,y\in\F$}.
\end{equation}
\end{definition}

The previous definition models a 
discrete memoryless channel where the noise acts independently on the individual components of a binary vector. 
The assumption $p \le q$ tells us that
it is more probable that a $1$ becomes a $0$ than a $0$ becomes a $1$; see~Figure \ref{fig:BAC}. The function~$\Ppq^n$ expresses the transition probabilities when the channel is used $n$ times.

\begin{figure}[H]
    \centering
    \begin{tikzpicture}
    \tikzset{main node/.style={circle,fill=DodgerBlue1!20,draw,minimum size=0.8cm, inner sep=0pt}}

    	\node[main node] (1) {$0$};
    	\node[main node] (2) [right=4cm of 1] {$0$};
    	\node[main node] (3) [below=1.5cm of 1] {$1$};
    	\node[main node] (4) [right=4cm of 3] {$1$};
    
        \path[draw,thick,-{stealth},>=stealth,shorten >=1pt]
	    (1) edge node [above] {$1-p$} (2)
	    (1) edge node [above,pos=0.3] {$p$} (4)
	    (3) edge node [below,pos=0.3] {$q$} (2)
	    (3) edge node [below] {$1-q$} (4)
	    ;
    \end{tikzpicture}   
    \caption{}
    \label{fig:BAC}
\end{figure}

The main motivation for us 
to consider the binary asymmetric channel comes from recent developments in the theory of \textit{neural codes}.
More precisely, it has been shown in
\cite{curto2013combinatorial} that 
binary codes for the asymmetric channel
can be seen as a discretization of
\textit{receptive field codes}.
These are neural codes describing 
the brain's representation of the so called \textit{space of stimuli} covered by the \textit{receptive fields}.
In this context, binary codes are sometimes called \textit{combinatorial neural codes}.
We refer to \cite{curto2013neural,curto2013combinatorial} for further details and to~\cite[Figure~1]{curto2013neural} for a graphical representation of the link between stimuli and binary vectors.

Following~\cite{curto2013combinatorial}, we define codes for the binary asymmetric channel as follows.

\begin{definition}
    A 
    \textbf{code} is a subset $C \subseteq \F$ with $|C| \ge 2$. Its elements are called \textbf{codewords}.
\end{definition}

To simplify the discussion in the sequel, we introduce the following symbols.

\begin{notation}
We denote by $\wH(x) :=
|\{1 \le i \le n \, : \, x_i = 1\}|$ the \textbf{Hamming weight} of a binary vector  $x\in\F$. Moreover, for all $x,y \in \F$ we let 
\begin{align*}
    d_{00}(y,x)&:=|\{i\,:\,y_i=x_i=0\}|, & d_{01}(y,x)&:=|\{i\,:\,y_i=0 \mbox{ and } x_i=1\}|,\\
    d_{11}(y,x)&:=|\{i\,:\,y_i=x_i=1\}|, & d_{10}(y,x)&:=|\{i\,:\,y_i=1 \mbox{ and } x_i=0\}|.
\end{align*}
\end{notation}

We start with a preliminary result that gives a convenient expression for~$\Ppq^n$. We will need it later.

\begin{lemma}
\label{lem:Pxy}
	For all $x,y\in\F$ we have 
\begin{equation*}
		\Ppq^n(y \mid x)=\left(\frac{q}{1-p}\right)^{d_{01}(y,x)}\left(\frac{p}{1-q}\right)^{d_{10}(y,x)}(1-q)^{\wH(y)} \, (1-p)^{n-\wH(y)},
	\end{equation*}
	where we set $0^0:=1$.
\end{lemma}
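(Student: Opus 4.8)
The plan is to reduce everything to the memoryless factorization~\eqref{eq:Prob} and then to organize the $n$ factors according to the joint value of the pair $(y_i,x_i)$. First I would write $\Ppq^n(y\mid x)=\prod_{i=1}^n \Ppq(y_i\mid x_i)$ and observe that each single-symbol factor is completely determined by the pair $(y_i,x_i)$, which takes one of the four values $(0,0)$, $(0,1)$, $(1,0)$, $(1,1)$, contributing respectively $1-p$, $q$, $p$, $1-q$ by the defining transition probabilities. The index set $\{1,\dots,n\}$ partitions into exactly these four classes, of sizes $d_{00}(y,x)$, $d_{01}(y,x)$, $d_{10}(y,x)$, $d_{11}(y,x)$. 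Collecting equal factors therefore yields
\begin{equation*}
\Ppq^n(y\mid x)=(1-p)^{d_{00}(y,x)}\,q^{d_{01}(y,x)}\,p^{d_{10}(y,x)}\,(1-q)^{d_{11}(y,x)}.
\end{equation*}

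Next I would massage this raw product into the shape claimed in the statement. The key inputs are the two elementary identities
\begin{equation*}
\wH(y)=d_{10}(y,x)+d_{11}(y,x),\qquad n-\wH(y)=d_{00}(y,x)+d_{01}(y,x),
\end{equation*}
which merely record that the coordinates where $y_i=1$ (respectively $y_i=0$) split further according to the value of $x_i$. Substituting the factorizations $(1-q)^{\wH(y)}=(1-q)^{d_{10}(y,x)}(1-q)^{d_{11}(y,x)}$ and $(1-p)^{n-\wH(y)}=(1-p)^{d_{00}(y,x)}(1-p)^{d_{01}(y,x)}$ into the claimed right-hand side, and cancelling the factors $(1-q)^{d_{10}(y,x)}$ and $(1-p)^{d_{01}(y,x)}$ against the denominators appearing in $(p/(1-q))^{d_{10}(y,x)}$ and $(q/(1-p))^{d_{01}(y,x)}$, recovers precisely the product displayed above.

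There is no genuine obstacle here: the argument is a direct computation resting on the memoryless factorization and on the bookkeeping of the four coordinate types. The only point that requires minor care is the boundary of the parameter range, where $p$ or $q$ may equal $0$ and some of the base quantities vanish; the stated convention $0^0:=1$ is exactly what makes both sides agree in those degenerate cases, so I would flag it at the moment the corresponding factor first appears rather than treat it as a separate case.
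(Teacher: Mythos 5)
Your proposal is correct and follows essentially the same route as the paper: expand $\Ppq^n(y\mid x)$ coordinatewise into the raw product $(1-p)^{d_{00}}\,q^{d_{01}}\,p^{d_{10}}\,(1-q)^{d_{11}}$ and then use the identities $d_{11}(y,x)=\wH(y)-d_{10}(y,x)$ and $d_{00}(y,x)=n-\wH(y)-d_{01}(y,x)$ to regroup the factors. The only cosmetic difference is that the paper isolates $p=0$ as a separate case before performing the division by $p^{0}$-type factors, whereas you absorb the degenerate cases into the convention $0^0:=1$ at the point of cancellation; both treatments are valid since $1-p$ and $1-q$ never vanish on the stated parameter range.
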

\begin{proof}
Using the  definition of $\Ppq^n(y\mid x)$ directly, with the convention that $0^0=1$, one finds
\begin{equation}\label{eq:int}
    \Ppq^n(y\mid x)=(1-p)^{d_{00}(y,x)} \, p^{d_{10}(y,x)}\, q^{d_{01}(y,x)}\, (1-q)^{d_{11}(y,x)}.
\end{equation}
By definition we have
	\begin{equation*}
		d_{11}(y,x)=\wH(y)-d_{10}(y,x), \qquad d_{00}(y,x)=n-\wH(y)-d_{01}(y,x).
	\end{equation*}
In particular, the result immediately follows from \eqref{eq:int} when $p=0$.
If $p>0$ we write
	\begin{equation*}
		\begin{aligned}
			\Ppq^n(y\mid x)
			&=(1-p)^{n-\wH(y)-d_{01}(y,x)} \, p^{d_{10}(y,x)} \, q^{d_{01}(y,x)}\, (1-q)^{\wH(y)-d_{10}(y,x)}\\	&=\left(\frac{q}{1-p}\right)^{d_{01}(y,x)}\left(\frac{p}{1-q}\right)^{d_{10}(y,x)}(1-q)^{\wH(y)} \, (1-p)^{n-\wH(y)},
		\end{aligned}
	\end{equation*}
which is the desired expression.
\end{proof}

\begin{remark}
The binary asymmetric channel generalizes both the Z-channel and the binary symmetric channel, which are obtained for particular choices of the parameters~$p$ and~$q$. More precisely, $p=0$ (and $q$ arbitrary) gives the Z-channel, while $p=q$ gives the binary symmetric channel; see e.g.~\cite{constantin1979theory} and~\cite{cover2006theory}, respectively. \end{remark}

\bigskip
\section{Discrepancy}
\label{sec:dis}

In this section we introduce two functions that measure 
how ``different'' binary vectors are with respect to the binary asymmetric channel. We call these
\textit{discrepancy} and \textit{symmetric discrepancy}. We then prove that the 
maximum likelihood decoder for the binary asymmetric channel coincides 
with the decoder naturally associated with the first discrepancy function.

In the second part of the section we define two new parameters of a code $C \subseteq \F$ (one for each discrepancy notion) and establish their main properties. In later sections we will give evidence that these parameters measure the quality of a code for the binary asymmetric channel.

\begin{notation} 
\label{mainnot}
In the remainder of the paper we work with fixed real numbers $p$ and $q$ that satisfy 
$0 \le p \le q < 1/2$. We also let
\begin{equation*}
	\gamma:= \log_{\frac{q}{1-p}} \left(\frac{p}{1-q}\right) \in \R \cup \{+\infty\},
\end{equation*}
with the convention that 
$\gamma=+\infty$ if $p=0$ (and for any value of $q$).
\end{notation}

The quantity $\gamma$ (or better its inverse) has already been studied in connection with the binary asymmetric channel.
More precisely, the authors of \cite{qureshi2018equivalence} use the expression
\begin{equation*}
        S(p,q)=\frac{\ln(1-p)-\ln(q)}{\ln(1-q)-\ln(p)}=1/\gamma
    \end{equation*} 
    to classify
binary asymmetric channels up to equivalence. In this paper we will instead use~$\gamma$~to define a discrepancy function between binary vectors and new parameters of codes for the binary asymmetric channel. We start with the following numerical facts.

\begin{lemma} \label{lem:ineq}
	\begin{enumerate}[label={(\arabic*)}]
		\item We have $0\leq\frac{p}{1-q}\leq\frac{q}{1-p}< 1$. Moreover, $\frac{p}{1-q}=\frac{q}{1-p}$ if and only if $p=q$.
		\item We have $\gamma\ge 1$, with equality if and only if $0<p=q$.
	\end{enumerate}
\end{lemma}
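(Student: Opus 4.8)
The plan is to prove the two parts separately, both reducing to elementary analysis of the functions $p \mapsto \frac{p}{1-q}$ and $q \mapsto \frac{q}{1-p}$ under the constraint $0 \le p \le q < 1/2$.

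For part (1), I would first handle the trivial lower bound $0 \le \frac{p}{1-q}$, which holds since $p \ge 0$ and $1-q > 0$. For the middle inequality $\frac{p}{1-q} \le \frac{q}{1-p}$, the key move is to clear denominators: since $1-q > 0$ and $1-p > 0$, this is equivalent to $p(1-p) \le q(1-q)$, i.e. $p - p^2 \le q - q^2$, which rearranges to $q - p \ge q^2 - p^2 = (q-p)(q+p)$. Since $q - p \ge 0$, this reduces to either $q = p$ or $1 \ge p + q$; the latter holds because $p, q < 1/2$ gives $p + q < 1$. This same computation shows equality $\frac{p}{1-q} = \frac{q}{1-p}$ holds iff $q - p = (q-p)(q+p)$, i.e. iff $q = p$ (as $p+q < 1$ rules out the other factor). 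For the strict upper bound $\frac{q}{1-p} < 1$, I would observe this is equivalent to $q < 1 - p$, i.e. $p + q < 1$, which again follows from $p, q < 1/2$.

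For part (2), recall $\gamma = \log_{q/(1-p)}\!\bigl(\tfrac{p}{1-q}\bigr)$. By part (1) the base $b := \frac{q}{1-p}$ satisfies $0 \le b < 1$ and the argument $a := \frac{p}{1-q}$ satisfies $0 \le a \le b < 1$. Since the base $b$ is in $(0,1)$, the logarithm $\log_b$ is a \emph{decreasing} function, so $a \le b$ gives $\log_b a \ge \log_b b = 1$, yielding $\gamma \ge 1$. I would treat the $p = 0$ case separately via the stated convention $\gamma = +\infty \ge 1$. The equality analysis requires care: $\gamma = 1$ holds iff $\log_b a = 1$ iff $a = b$, which by part (1) happens iff $p = q$; combined with needing $b \in (0,1)$ genuinely (so that the log is well-defined and the monotonicity argument applies) this forces $0 < p = q$, matching the claim.

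The main obstacle is bookkeeping around the degenerate parameter values rather than any deep inequality. Specifically, when $p = 0$ the base and argument can vanish and $\gamma$ is defined by convention, so I must verify that the $p = q = 0$ and $p = 0 < q$ cases are consistent with the stated conclusions ($\gamma = +\infty$, hence $\gamma \ge 1$ with strict inequality). When $p = q$ with $p > 0$, I must confirm $b = a \in (0,1)$ strictly so that $\log_b b = 1$ is legitimately defined. Once these boundary cases are dispatched, the interior argument is just the monotonicity of $\log_b$ for a base in $(0,1)$, so the only real content is ensuring the equality conditions correctly exclude $p = q = 0$ from the ``$\gamma = 1$'' case.
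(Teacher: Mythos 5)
Your proposal is correct and follows essentially the same route as the paper: clearing denominators to reduce $\frac{p}{1-q}\le\frac{q}{1-p}$ to $(p+q)(q-p)\le q-p$ via $p+q<1$, and then deducing $\gamma\ge 1$ from $0\le\frac{p}{1-q}\le\frac{q}{1-p}<1$ using the monotonicity of the logarithm with base in $(0,1)$. If anything, you are more thorough than the paper, which does not spell out the equality case of part (2) or the bookkeeping for $p=0$ that you correctly flag.
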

\begin{proof}
	Since $0 \le p \le q<1/2$, we have
		$0 \le \frac{p}{1-q}< 1$ and $0 \le \frac{q}{1-p}< 1$. If $p=q$ then we clearly have $\frac{p}{1-q}=\frac{q}{1-p}$. On the other hand, if $p< q$ then $p+q<1$ implies $(p+q)(q-p) < q-p$. The latter inequality can be re-written as $p(1-p) < q(1-q)$, i.e., as $\frac{p}{1-q}<\frac{q}{1-p}$. Finally, since $\frac{p}{1-q}\leq\frac{q}{1-p}<1$, we have $\gamma\geq 1$.
\end{proof}

The first notion of discrepancy we propose is the following.

\begin{definition}
\label{def:dis}
	The \textbf{discrepancy} between  $y,x\in\F$ is
	$\delta_{p,q}(y,x):=\gamma_{p,q} \,  d_{10}(y,x)+d_{01}(y,x)$,
	with the convention that $+\infty \cdot 0=0$.
\end{definition}

Notice that if $0<p=q$ then $\gamma=1$ and therefore $\delta_{p,q}$ coincides with the \textbf{Hamming distance}~$\dH$ on~$\F$. The latter 
is  defined by 
$\dH(x,y):=|\{i \, : \,  x_i \neq y_i\}|$ for all $x,y \in \F$, which is clearly symmetric.
If $p<q$, then $\delta_{p,q}$ is
not a symmetric function in general. For example, if $p\neq 0$, $x=(1,1,1)$, and $y=(1,0,0)$, then $\delta_{p,q}(y,x)=2<2\gamma=\delta_{p,q}(x,y)$.

A natural way to decode a received message $y$ is to
return the codeword $x \in C$ that maximizes 
$\PP_{p,q}^n(y \mid x)$. The following definition is therefore standard in information theory\footnote{As in this paper we do not focus on complexity theory, decoders are defined as (deterministic) functions rather than algorithms.}.

\begin{definition} \label{def:mld}
For a code $C \subseteq \F$, the \textbf{maximum likelihood decoder} is the function $D_C:\F \to C \cup \{\fa\}$ defined by
$$D_C(y):= \left\{ 
\begin{array}{cl}
x & \mbox{if $x$ is the unique codeword that maximizes $\PP_{p,q}^n(y \mid x)$,} \\
\fa & \mbox{otherwise,}
\end{array}  \right.$$
where $\fa \notin \F$ denotes a failure message.
\end{definition}

The following result shows that the discrepancy function  $\delta_{p,q}$ \textit{matches} (in the sense of the natural generalization of a concept of S\'eguin to functions that are not necessarily distances; see~\cite{seguin1980metrics}) the  binary asymmetric channel. 

\begin{theorem} 
\label{th:decoders}
Let $x,x',y \in \F$. The following are equivalent:
\begin{enumerate}[label={(\arabic*)}]
\item \label{e1} $\delta_{p,q}(y,x) < \delta_{p,q}(y,x')$,
\item \label{e2} $\PPn_{p,q}(y \mid x) > \PPn_{p,q}(y \mid x')$.
\end{enumerate}
\end{theorem}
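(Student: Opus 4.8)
The plan is to collapse both conditions into the comparison of a single strictly monotone function of the discrepancy, using the closed form from Lemma~\ref{lem:Pxy}. The key step I would establish first is the identity
\begin{equation*}
\Ppq^n(y\mid x)=\left(\frac{q}{1-p}\right)^{\delta_{p,q}(y,x)}(1-q)^{\wH(y)}\,(1-p)^{n-\wH(y)}.
\end{equation*}
To obtain it, I would start from Lemma~\ref{lem:Pxy} and rewrite only its first two factors. By the very definition of $\gamma$ in Notation~\ref{mainnot} one has $\frac{p}{1-q}=\bigl(\frac{q}{1-p}\bigr)^{\gamma}$, hence
\begin{equation*}
\left(\frac{q}{1-p}\right)^{d_{01}(y,x)}\left(\frac{p}{1-q}\right)^{d_{10}(y,x)}=\left(\frac{q}{1-p}\right)^{d_{01}(y,x)+\gamma\,d_{10}(y,x)}=\left(\frac{q}{1-p}\right)^{\delta_{p,q}(y,x)},
\end{equation*}
which is exactly the exponent defining $\delta_{p,q}(y,x)$.

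Next I would exploit monotonicity. Write $a:=\frac{q}{1-p}$; by Lemma~\ref{lem:ineq}(1) we have $a<1$, and $a>0$ as soon as $q>0$ (which holds whenever $p>0$, and also in the genuinely noisy Z-channel $p=0<q$). The trailing factor $(1-q)^{\wH(y)}(1-p)^{n-\wH(y)}$ depends only on $y$ and is strictly positive because $p,q<1/2$. Thus, for fixed $y$, the map $x\mapsto\Ppq^n(y\mid x)$ is a fixed positive constant times $a^{\delta_{p,q}(y,x)}$, and since $0<a<1$ the function $t\mapsto a^{t}$ is strictly decreasing. Comparing the codewords $x$ and $x'$ then yields
\begin{equation*}
\Ppq^n(y\mid x)>\Ppq^n(y\mid x')\iff a^{\delta_{p,q}(y,x)}>a^{\delta_{p,q}(y,x')}\iff\delta_{p,q}(y,x)<\delta_{p,q}(y,x'),
\end{equation*}
which is precisely the equivalence between~\ref{e1} and~\ref{e2}.

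The step I expect to require the most care is the boundary behaviour at $p=0$, i.e.\ the Z-channel, where $\gamma=+\infty$ and the displayed identity must be read through the conventions $0^{0}=1$ and $+\infty\cdot 0=0$. There I would argue directly, extending $t\mapsto a^{t}$ to $[0,+\infty]$ by $a^{+\infty}:=0$ (legitimate since $0<a<1$) and checking case by case according to whether $d_{10}(y,x)$ vanishes that the right-hand side still reproduces $\Ppq^n(y\mid x)$: when $d_{10}(y,x)=0$ the discrepancy is the finite number $d_{01}(y,x)$ and the formula is the ordinary one, whereas when $d_{10}(y,x)>0$ both $\delta_{p,q}(y,x)=+\infty$ and $\Ppq^n(y\mid x)=0$. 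Since $t\mapsto a^{t}$ stays strictly decreasing on the extended domain $[0,+\infty]$, the final chain of equivalences goes through verbatim, including the cases where one or both discrepancies are infinite.
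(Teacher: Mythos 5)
Your proof is correct and follows essentially the same route as the paper's: both start from Lemma~\ref{lem:Pxy}, absorb the two exponential factors into a single power of $q/(1-p)$ with exponent $\delta_{p,q}(y,x)$ via the defining property of $\gamma$, and conclude by strict monotonicity of $t\mapsto a^t$ for $0<a<1$, your extension $a^{+\infty}:=0$ being merely a slicker packaging of the paper's explicit case analysis on $d_{10}$ when $p=0$. The only case you silently omit is $p=q=0$, where $a=0$ and $t\mapsto a^t$ is no longer strictly decreasing; but there the equivalence itself breaks down (for $n=2$, $y=(0,0)$, $x=(0,1)$, $x'=(1,1)$ one has $\delta_{p,q}(y,x)=1<2=\delta_{p,q}(y,x')$ while $\PPn_{p,q}(y\mid x)=\PPn_{p,q}(y\mid x')=0$), so nothing is lost and the paper's own treatment of that degenerate subcase is equally vulnerable.
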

\begin{proof}
 By Lemma \ref{lem:Pxy}, proving that $\PPn(y \mid x')<\PPn(y \mid x)$ is equivalent to showing that
	\begin{equation} 
	\label{crit}
		\left(\frac{q}{1-p}\right)^{d_{01}(y,x')}\left(\frac{p}{1-q}\right)^{d_{10}(y,x')}<\left(\frac{q}{1-p}\right)^{d_{01}(y,x)}\left(\frac{p}{1-q}\right)^{d_{10}(y,x)},
	\end{equation}
	where $0^0=1$ by convention.
	Assume $p>0$. Then using the definition of discrepancy the inequality in \eqref{crit} can be re-written as
\begin{equation*}
	\left(\frac{q}{1-p}\right)^{\delta_{p,q}(y,x')} < \left(\frac{q}{1-p}\right)^{\delta_{p,q}(y,x)}.
\end{equation*}
Since $\frac{q}{1-p}<1$ by Lemma~\ref{lem:ineq}, the latter inequality holds if and only if $\delta_{p,q}(y,x)<\delta_{p,q}(y,x')$, as desired.

Now assume $p=0$ and $q$ arbitrary (possibly $q=0$ as well). 
Then \eqref{crit} is equivalent to
$$q^{d_{01}(y,x')} \, 0^{d_{10}(y,x')} < 
q^{d_{01}(y,x)} \, 0^{d_{10}(y,x)}.$$
This happens if and only if one of the following holds:
\begin{itemize}
    \item $d_{10}(y,x')=d_{10}(y,x)=0$ and 
    $d_{01}(y,x')>d_{01}(y,x)$,
    \item $d_{10}(y,x)=0$ and $d_{10}(y,x')>0$.
\end{itemize}
By definition of $\delta_{p,q}$, this is equivalent to
$\delta_{p,q}(y,x) < \delta_{p,q}(y,x')$,
concluding the proof.
\end{proof}

Theorem \ref{th:decoders} shows that, for any code $C \subseteq \F$,
the maximum likelihood decoder $D_C$ coincides with the 
\textbf{minimum discrepancy decoder} $D^\delta_C:\F \to C \cup \{\fa\}$. This is defined by
 $$D^\delta_C(y):= \left\{ 
\begin{array}{cl}
x & \mbox{if $x$ is the unique codeword that minimizes $\delta_{p,q}(y,x)$,} \\
\fa & \mbox{otherwise,}
\end{array}  \right.$$
where $\bm{f}$ is the same failure message as in Definition~\ref{def:mld}.

\begin{remark}
It is natural to compare the minimum discrepancy decoder $D^\delta_C$ (or equivalently the maximum likelihood decoder $D_C$) with the minimum Hamming distance decoder, denoted by $D_C^\HH$. These two decoders are different in general.
For example, let $p=0.1$ and $q=0.4$. Then $\gamma_{p,q}\approx 2.21$. Let 
	\begin{equation*}
		\begin{aligned}
			C=\{(0,0,0),(0,1,0),(1,1,0),(1,1,1)\} \subseteq \FF^3
		\end{aligned}
	\end{equation*}
	and $y=(0,0,1)$. One can check that $D^\delta_C(y)=(1,1,1)$, while $D_C^\HH(y)=(0,0,0)$.
\end{remark}

\begin{notation}
In the reminder of the paper we focus on the structure of codes endowed with the discrepancy function $\delta_{p,q}$.
For this type of study we will need to exclude the extreme case of the Z-channel from our treatment. Therefore, from now on, we will always assume
\begin{center}
    \framebox{\,$0<p \le q < 1/2$.}
\end{center}
To simplify the notation, we will also omit the subscript ``$p,q$'' in symbols, writing for example~$\PPn$ for~$\PPn_{p,q}$ and $\delta$ for $\delta_{p,q}$.
\end{notation}

We continue by introducing a second discrepancy function, which we denote by $\hat\delta$. In Section~\ref{sec:PUD} we will use both notions of discrepancy ($\delta$ and $\hat\delta$) to estimate the failure probability of the maximum likelihood decoder.

\begin{definition}
\label{def:symdis}
     \textbf{symmetric discrepancy} between vectors  $y,x\in\F$ is
	$\hat\delta(y,x):=\delta(y, x)-\wH(y)(\gamma-1)$.
\end{definition}

As we will see throughout the paper, the  functions $\delta$ and $\hat\delta$ have very different mathematical properties. For example, while $\delta$ is not symmetric but satisfies the triangular inequality, $\hat\delta$ is symmetric but does not satisfy the triangular inequality.

We now show
that $\hat\delta$ is indeed a symmetric function. We start with the following preliminary result.

\begin{lemma}
	\label{lem:relations}
	Let $x,y\in\F$. The following hold:
	\begin{enumerate}[label={(\arabic*)}]
		\item \label{dd} $d_{10}(y,x)=d_{10}(x,y)+\wH(y)-\wH(x)$,
		\item \label{ee} $d_{01}(y,x)=d_{01}(x,y)+\wH(x)-\wH(y)$,
		\item \label{ff} $\delta(y,x)=\delta(x,y)+(\wH(y)-\wH(x))(\gamma-1)$.
		\item \label{gg} $\delta(y,x)=\dH(y,x)+(\gamma-1)d_{10}(y,x)$.
		\item \label{hh} $\hat\delta(y,x)=\dH(y,x)-(\gamma-1)d_{11}(y,x)$. 
	\end{enumerate}
\end{lemma}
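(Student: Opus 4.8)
The plan is to reduce all five identities to two elementary observations about the counting functions $d_{ij}$ and then obtain each statement by direct substitution. The first observation is the partition of the support: counting the coordinates at which $y_i=1$ gives $\wH(y)=d_{10}(y,x)+d_{11}(y,x)$, and counting those at which $x_i=1$ gives $\wH(x)=d_{01}(y,x)+d_{11}(y,x)$; subtracting these yields the crucial relation $\wH(y)-\wH(x)=d_{10}(y,x)-d_{01}(y,x)$. The second observation is a swap symmetry: interchanging the roles of $x$ and $y$ in the defining conditions exchanges the labels $0$ and $1$, so that $d_{10}(x,y)=d_{01}(y,x)$ and $d_{01}(x,y)=d_{10}(y,x)$.

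With these two facts in hand, I would establish the statements in an order that lets later ones reuse earlier ones. For \ref{dd} I would substitute $d_{10}(x,y)=d_{01}(y,x)$ into the right-hand side and use the subtracted weight relation to cancel, and \ref{ee} follows identically with the roles of $x$ and $y$ reversed. For \ref{ff} I would expand both $\delta(y,x)$ and $\delta(x,y)$ directly from Definition~\ref{def:dis}, apply the swap symmetry to rewrite $\delta(x,y)=\gamma\,d_{01}(y,x)+d_{10}(y,x)$, and collect the difference as $(\gamma-1)\bigl(d_{10}(y,x)-d_{01}(y,x)\bigr)$, which equals $(\gamma-1)(\wH(y)-\wH(x))$ by the weight relation. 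For \ref{gg} I would use that the coordinates on which $x$ and $y$ differ split exactly into the $10$- and $01$-types, so $\dH(y,x)=d_{10}(y,x)+d_{01}(y,x)$, and then rewrite $\delta(y,x)=\gamma\,d_{10}(y,x)+d_{01}(y,x)$ accordingly. Finally, \ref{hh} follows by inserting \ref{gg} into the definition $\hat\delta(y,x)=\delta(y,x)-\wH(y)(\gamma-1)$ from Definition~\ref{def:symdis} and replacing $d_{10}(y,x)-\wH(y)$ by $-d_{11}(y,x)$ via the support partition.

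Since every step is a direct substitution, I do not anticipate any genuine obstacle; the only point requiring care is bookkeeping — keeping the arguments of the non-symmetric functions $d_{ij}$ and $\delta$ in the correct order throughout, since inadvertently confusing $d_{10}(y,x)$ with $d_{10}(x,y)$ would silently break the cancellations that make each identity work.
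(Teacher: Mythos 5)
Your proposal is correct and follows essentially the same route as the paper: both arguments reduce everything to the support partitions $\wH(y)=d_{10}(y,x)+d_{11}(y,x)$, $\wH(x)=d_{01}(y,x)+d_{11}(y,x)$ (equivalently the swap symmetry of the $d_{ij}$) and then derive each identity by direct substitution, with \ref{hh} obtained from \ref{gg} exactly as in the paper. The only cosmetic difference is that the paper deduces \ref{ff} from \ref{dd} and \ref{ee} while you expand both discrepancies directly; this changes nothing of substance.
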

\begin{proof}
In order to prove the first equality, observe that $d_{10}(y,x)+d_{11}(y,x)=\wH(y)$, from which 
	\begin{equation*}
		d_{10}(y,x) =\wH(y)-d_{11}(y,x)=\wH(y)-d_{11}(x,y)=\wH(y)-\wH(x)+d_{10}(x,y).
	\end{equation*}
	Analogously, the second equality follows from the fact that  $d_{01}(y,x)+d_{11}(y,x)=\wH(x)$. The third equality is a consequence of the first and the second. The fourth equality follows from the fact that $\dH(y,x)=d_{10}(y,x)+d_{01}(y,x)$. Finally, combining the fourth equality with $d_{10}(y,x)+d_{11}(y,x)=\wH(y)$, we have
	\begin{equation*}
	    \hat\delta(y,x)=\delta(y,x)-\wH(y)(\gamma-1)=\dH(y,x)-(\gamma-1)d_{11}(y,x).
	\end{equation*}
	This concludes the proof.
\end{proof}

The following result follows from 
Lemma~\ref{lem:relations}\label{hh} and the fact that $\dH(y,x)=\dH(x,y)$ and $d_{11}(x,y)=d_{11}(y,x)$ for all $x,y\in\F$

\begin{proposition}
\label{prop:deltahat}
	For all $x,y\in\F$ we have $\hat\delta(y, x)=\hat\delta(x, y)$.
\end{proposition}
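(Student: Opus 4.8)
The plan is to invoke the characterization of $\hat\delta$ provided by Lemma~\ref{lem:relations}\ref{hh}, namely $\hat\delta(y,x)=\dH(y,x)-(\gamma-1)d_{11}(y,x)$, and then simply observe that every quantity appearing on the right-hand side is invariant under swapping its two arguments. Indeed, the Hamming distance is symmetric by definition, $\dH(y,x)=\dH(x,y)$, and $d_{11}(y,x)$ counts the indices $i$ with $y_i=x_i=1$, a condition that is unchanged when $x$ and $y$ are interchanged, so $d_{11}(y,x)=d_{11}(x,y)$; the scalar $\gamma-1$ does not depend on the arguments at all. Substituting these equalities into the expression for $\hat\delta(y,x)$ then gives $\hat\delta(y,x)=\dH(x,y)-(\gamma-1)d_{11}(x,y)=\hat\delta(x,y)$, which is exactly the claim.

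An alternative route, starting directly from Definition~\ref{def:symdis} rather than from part \ref{hh}, would be to verify that $\delta(y,x)-\wH(y)(\gamma-1)=\delta(x,y)-\wH(x)(\gamma-1)$; after rearranging, this is precisely the identity $\delta(y,x)-\delta(x,y)=(\wH(y)-\wH(x))(\gamma-1)$ recorded in Lemma~\ref{lem:relations}\ref{ff}. Either path reduces the proposition to facts already established in Lemma~\ref{lem:relations}, so the statement requires no genuinely new computation.

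There is essentially no obstacle here, and this is by design: the point of isolating the reformulation in Lemma~\ref{lem:relations}\ref{hh} is that it pushes all of the asymmetry of $\delta$ (which lives in the $d_{10}$ term weighted by $\gamma$) into the correction $\wH(y)(\gamma-1)$, leaving behind only the manifestly symmetric terms $\dH$ and $d_{11}$. The single thing worth a second glance is the coefficient and sign in part \ref{hh}, i.e.\ that the combination $\delta(y,x)-\wH(y)(\gamma-1)$ really collapses to $\dH(y,x)-(\gamma-1)d_{11}(y,x)$; this rests on the two bookkeeping identities $d_{10}(y,x)+d_{11}(y,x)=\wH(y)$ and $\dH(y,x)=d_{10}(y,x)+d_{01}(y,x)$, both already employed in the proof of the lemma. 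Given that, the proposition is immediate.
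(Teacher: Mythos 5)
Your argument is correct and coincides with the paper's own proof, which likewise derives the symmetry of $\hat\delta$ from the identity $\hat\delta(y,x)=\dH(y,x)-(\gamma-1)d_{11}(y,x)$ of Lemma~\ref{lem:relations} together with the symmetry of $\dH$ and $d_{11}$. The alternative route via part~\ref{ff} that you sketch is also valid, but adds nothing beyond what the paper already does.
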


In the remainder of the section we turn to the structure of codes  $C \subseteq \F$. Each discrepancy notion ($\delta$ and $\hat\delta$)
defines a code parameter as follows.

\begin{definition}
For a code $C$, let
\begin{align*}
		\delta(C) &:= \min\{\delta(x,x') \, : \,  x,x' \in C, \, x \neq x'\}, \\
		\hat\delta(C)&:= \min\{\hat\delta(x,x') \, : \, x,x' \in C, \, x \neq x'\}.
\end{align*}
We call these the \textbf{minimum discrepancy} and the 
\textbf{minimum symmetric discrepancy} of~$C$, respectively.
\end{definition}

\begin{remark}
\label{rem:d2leqd1}
	For any code $C$ we have $\hat\delta(C)\leq\delta(C)$. Indeed, if $x,x'\in C$ satisfy $\delta(x,x')=\delta(C)$, then $\hat\delta(C)\leq \hat\delta(x,x') =\delta(x,x')-\wH(x)(\gamma-1)\leq\delta(x,x')=\delta(C)$. The numbers $\delta(C)$ and $\hat\delta(C)$ are very different in general. Moreover, while $\delta(C)$ is always a non-negative number, $\hat\delta(C)$ can even be negative.
\end{remark}

\begin{example}
   Let $p:=0.1$ and $q:=0.3$, from which $\gamma\approx1.77$. One can check that for  $C=\{(1,0,0),(0,1,1),(1,1,1)\}$ we have $\delta(C)=1$ and $\hat\delta(C)=\hat\delta((0,1,1),(1,1,1))\approx-0.54$.
\end{example}

The following result gives sufficient conditions under which a vector $y \in \F$ decodes to a codeword $x \in C$, in terms of the minimum (symmetric) discrepancies of $C$. In Theorem~\ref{thm:PUD} we will use these conditions to obtain bounds for the probability that the maximum likelihood decoder fails.

\begin{proposition} 
\label{prop:suff}
Let $C \subseteq \F$ be a code. Let $x \in C$ and $y \in \F$. We have
$D_C(y)=x$, provided that one of the following holds:
\begin{enumerate}[label={(\arabic*)}]
\item \label{item1:suff} $\delta(y,x)  <  \frac{\delta(C)+(\wH(y)-\wH(x))(\gamma-1)}{2}$, or
\item \label{item2:suff} $\delta(y,x) < \frac{\hat\delta(C)+\wH(y)(\gamma-1)}{2}$ .
\end{enumerate}
\end{proposition}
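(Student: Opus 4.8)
The plan is to reduce the statement to a comparison of discrepancies via Theorem~\ref{th:decoders}, and then to bound $\delta(y,x')$ from below for every competing codeword $x'$ by a triangle-inequality argument.

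First I would recast what must be shown. By Theorem~\ref{th:decoders}, the inequality $\delta(y,x)<\delta(y,x')$ is equivalent to $\PPn(y\mid x)>\PPn(y\mid x')$. Hence, to conclude $D_C(y)=x$ it suffices to prove that $\delta(y,x)<\delta(y,x')$ holds for every $x'\in C$ with $x'\neq x$: this makes $x$ the unique maximizer of $\PPn(y\mid\cdot)$ over $C$, which is exactly the defining condition for $D_C(y)=x$ in Definition~\ref{def:mld}. So I fix an arbitrary $x'\in C\setminus\{x\}$ and aim to bound $\delta(y,x')$ from below. The key tool is the triangle inequality for $\delta$, in the form $\delta(x,x')\le\delta(x,y)+\delta(y,x')$; should this not already be available, it follows from Lemma~\ref{lem:relations}\ref{gg}, which writes $\delta=\dH+(\gamma-1)d_{10}$, together with the subadditivity of $\dH$ and of $d_{10}$ and the fact that $\gamma-1\ge 0$ by Lemma~\ref{lem:ineq}. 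Rearranging gives $\delta(y,x')\ge\delta(x,x')-\delta(x,y)$. Since $\delta$ is not symmetric, the one careful point is to convert $\delta(x,y)$ into $\delta(y,x)$ via Lemma~\ref{lem:relations}\ref{ff}, namely $\delta(x,y)=\delta(y,x)+(\wH(x)-\wH(y))(\gamma-1)$.

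For part~\ref{item1:suff}, I would bound $\delta(x,x')\ge\delta(C)$ by definition of the minimum discrepancy, substitute the expression for $\delta(x,y)$ above, and collect the weight terms, obtaining $\delta(y,x')\ge\delta(C)-\delta(y,x)+(\wH(y)-\wH(x))(\gamma-1)$. The hypothesis in~\ref{item1:suff} is exactly the assertion that the right-hand side exceeds $\delta(y,x)$, so $\delta(y,x')>\delta(y,x)$ follows at once. For part~\ref{item2:suff} the skeleton is identical, but I start from the symmetric discrepancy: using $\hat\delta(x,x')=\delta(x,x')-\wH(x)(\gamma-1)$ from Definition~\ref{def:symdis} together with $\hat\delta(x,x')\ge\hat\delta(C)$ gives $\delta(x,x')\ge\hat\delta(C)+\wH(x)(\gamma-1)$. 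Inserting this and the conversion of $\delta(x,y)$ into the triangle bound, the two $\wH(x)(\gamma-1)$ contributions cancel and I am left with $\delta(y,x')\ge\hat\delta(C)-\delta(y,x)+\wH(y)(\gamma-1)$; the hypothesis in~\ref{item2:suff} again says precisely that this lower bound is larger than $\delta(y,x)$.

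The only step needing genuine care — and hence the main obstacle — is the bookkeeping of the $(\gamma-1)$ weight terms while respecting the asymmetry of $\delta$: one must apply Lemma~\ref{lem:relations}\ref{ff} in the correct direction, and orient the triangle inequality as $\delta(x,x')\le\delta(x,y)+\delta(y,x')$, so that the Hamming-weight corrections align and cancel as intended. Once the argument is set up this way, both parts reduce to a single substitution followed by the trivial rearrangement of the respective hypothesis.
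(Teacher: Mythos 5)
Your proposal is correct and follows essentially the same route as the paper: both orient the triangle inequality as $\delta(x,x')\le\delta(x,y)+\delta(y,x')$, convert $\delta(x,y)$ to $\delta(y,x)$ via Lemma~\ref{lem:relations}\ref{ff} to obtain $\delta(y,x')\ge\delta(x,x')-\delta(y,x)+(\wH(y)-\wH(x))(\gamma-1)$, and then lower-bound $\delta(x,x')$ by $\delta(C)$ (resp.\ by $\hat\delta(C)+\wH(x)(\gamma-1)$) before invoking Theorem~\ref{th:decoders}. Your aside deriving the triangle inequality from the subadditivity of $\dH$ and $d_{10}$ is a harmless variant of the paper's Lemma~\ref{lem:triandelta}, which is proved there by coordinatewise case analysis.
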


The proof of Proposition \ref{prop:suff} relies on the following triangular inequality for $\delta$.

\begin{lemma}
\label{lem:triandelta}
For all $x,y,z \in \F$ we have
\begin{equation*}
	\delta(z,x)\le \delta(z,y) + \delta(y,x).
\end{equation*}
\end{lemma}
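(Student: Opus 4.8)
The plan is to exploit the fact that $\delta$ decomposes as a sum of independent contributions over the $n$ coordinates. Writing $\delta(y,x)=\sum_{i=1}^n \delta^{(1)}(y_i,x_i)$, where $\delta^{(1)}\colon\FF\times\FF\to\R$ is the single-coordinate discrepancy with $\delta^{(1)}(1,0)=\gamma$, $\delta^{(1)}(0,1)=1$ and $\delta^{(1)}(0,0)=\delta^{(1)}(1,1)=0$, it suffices to prove the claimed inequality for $n=1$: for all bits $a,b,c\in\FF$,
\[
\delta^{(1)}(c,a)\le \delta^{(1)}(c,b)+\delta^{(1)}(b,a).
\]
Summing this over the coordinates (with $a=x_i$, $b=y_i$, $c=z_i$) then yields the lemma.

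The heart of the argument is therefore a finite check over the eight triples $(c,b,a)\in\FF^3$. First I would observe that whenever $c=a$ the left-hand side is $0$ while the right-hand side is a sum of non-negative terms (here $\gamma\ge 1>0$ by Lemma~\ref{lem:ineq} guarantees non-negativity), so those four cases are immediate. The remaining four cases are those with $c\neq a$; in each of them the intermediate bit $b$ equals either $a$ or $c$, and one checks directly that the single non-zero term on the left (either $\gamma$ or $1$) is reproduced exactly by one of the two terms on the right, the other term being non-negative. Thus the single-coordinate inequality in fact holds with the non-zero left-hand term matched term-by-term on the right.

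An equivalent and slightly slicker route uses the identity $\delta(z,x)=\dH(z,x)+(\gamma-1)\,d_{10}(z,x)$ from Lemma~\ref{lem:relations}. Since $\dH$ is the Hamming metric it already satisfies the triangle inequality, and a direct coordinate-wise inspection shows $d_{10}(z,x)\le d_{10}(z,y)+d_{10}(y,x)$ as well: a coordinate contributing to $d_{10}(z,x)$ has $z_i=1$, $x_i=0$, and according as $y_i=0$ or $y_i=1$ it contributes to $d_{10}(z,y)$ or to $d_{10}(y,x)$. Multiplying the latter inequality by $\gamma-1\ge 0$ and adding the two triangle inequalities then gives the result.

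I do not anticipate any genuine obstacle: the only point requiring care is that the coefficient $\gamma-1$ (equivalently the weight $\gamma$ attached to a $1\to 0$ discrepancy) is non-negative, which is exactly the content of the inequality $\gamma\ge 1$ established in Lemma~\ref{lem:ineq}. Were $\gamma<1$ the weighting would break the monotonicity used above and the triangle inequality could fail, so keeping track of this hypothesis is the main thing to flag.
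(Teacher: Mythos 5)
Your main argument is exactly the paper's proof: reduce to $n=1$ by additivity of $\delta$ over coordinates and verify the eight cases, which the paper does via a table; your case analysis (trivial when $c=a$, term-by-term matching when $c\neq a$) is correct, as is your observation that $\gamma\ge 1$ is the hypothesis that makes it work. Your alternative route via $\delta(z,x)=\dH(z,x)+(\gamma-1)d_{10}(z,x)$ and the subadditivity of $d_{10}$ is also valid, but it is only a mild repackaging of the same coordinate-wise check, so there is nothing further to flag.
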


\begin{proof}
Since the discrepancy is additive on the vector components, it is enough to prove the result for $n=1$. The case-by-case analysis is summarized in the following table:
\begin{center}
\begin{tabular}{|c|c|c?c|c|c|}
\hline
$a$ & $b$ & $c$ & $\delta(c,a)$ & $\delta(c,b)$ & $\delta(b,a)$ \\
\Xhline{3\arrayrulewidth}
0 & 0 & 0 & 0  & 0  & 0   \\ 
\hline
1 & 0 & 0 & $1$  & 0  & $1$   \\ 
\hline
0 & 1 & 0 & 0  & $1$  & $\gamma$   \\ 
\hline
0 & 0 & 1 &  $\gamma$ & $\gamma$  & 0   \\ 
\hline
1 & 1 & 0 & $1$  & $1$  & 0   \\ 
\hline
1 & 0 & 1 & 0  & $\gamma$  & $1$   \\ 
\hline
0 & 1 & 1 &  $\gamma$ & 0  &  $\gamma$  \\ 
\hline
1 & 1 & 1 & 0  & 0  & 0   \\ 
\hline
\end{tabular}
\end{center}
This concludes the proof by additivity.
\end{proof}

\begin{remark}
	Although $\hat\delta$ is symmetric, it does not satisfy a natural triangular inequality.
	More precisely, in general we have that
	$\hat\delta(z,x) \not\le \hat\delta(z,y) + \hat\delta(y,x)$.
	For example, suppose that $\gamma>1$ and let $x=(0,0,0)$, $y=(1,0,0)$, and $z=(1,1,0)$. Then 	$\hat\delta(z,x)=2>3-\gamma=\hat\delta(z,y)+\hat\delta(y,x)$.
	The closest ``triangular-type inequality'' we could derive for $\hat\delta$ is
	\begin{equation*}
	\label{eq:trianghat}
		\hat\delta(z,x)\le \hat\delta(z,y) + \hat\delta(y,x)+\wH(y)(\gamma-1),
	\end{equation*}
	which holds for all $x,y,z\in\F$ and contains $\wH(y)(\gamma-1)$ as correction term.
\end{remark}

\begin{proof}[Proof of Proposition \ref{prop:suff}]
We start by observing that, for any $x,x',y\in\F$, Lemmas~\ref{lem:relations} and~\ref{lem:triandelta} combined imply 
	\begin{equation}
	\label{eq:trianeqswap}
		\delta(y,x')\geq \delta(x,x')-\delta(y,x)+(\wH(y)-\wH(x))(\gamma-1).
\end{equation}
We now fix $x\in C$, $y\in\F$ and prove the two statements separately.
\begin{enumerate}[label={(\arabic*)}] 
    \item For every $x'\in C$ with $x' \neq x$, the inequality in~\eqref{eq:trianeqswap} and our assumption on $\delta(y,x)$ give
 	\begin{equation*}
		\begin{aligned}
			\delta(y,x')&\geq \delta(x,x')-\delta(y,x)+(\wH(y)-\wH(x))(\gamma-1)\\
			&>\delta(C)-\frac{\delta(C)+(\wH(y)-\wH(x))(\gamma-1)}{2}+(\wH(y)-\wH(x))(\gamma-1)\\
			&=\frac{\delta(C)+(\wH(y)-\wH(x))(\gamma-1)}{2}\\
			&>\delta(y,x).
		\end{aligned}
	\end{equation*}
	The desired statement now follows from Theorem \ref{th:decoders} and the definition of maximum likelihood decoder.
	\item 	Analogously, for every $x'\in C$ with $x' \neq x$, the inequality in~\eqref{eq:trianeqswap} and the assumption on $\delta(y,x)$ imply
	\begin{equation*}
		\begin{aligned}
			\delta(y,x')&\geq \delta(x,x')-\delta(y,x)+(\wH(y)-\wH(x))(\gamma-1)\\
			&\geq\hat\delta(C)-\frac{\hat\delta(C)+\wH(y)(\gamma-1)}{2}+\wH(y)(\gamma-1)\\
			&=\frac{\hat\delta(C)+\wH(y)(\gamma-1)}{2}\\
			&>\delta(y,x).
		\end{aligned}
	\end{equation*}
	Again, the statement follows from Theorem \ref{th:decoders}. \qedhere
\end{enumerate}
\end{proof}

\begin{remark}
In analogy with the Hamming distance, one may ask if $\delta(y,x)<\delta(C)/2$ implies $D_C(y)=x$, where $D_C$ is the maximum likelihood decoder; see Definition~\ref{def:mld}. This is not true in general. Take e.g. the code $C=\{(1,0,0),(0,1,1)\}$, with $p=0.1$ and $q=0.4$. We have $\delta(C)\approx 4.21$. Consider the vector $y=(0,0,0)$.
Then the set of codewords at discrepancy strictly less then $\delta(C)/2$ from $y$ is $C$, while the only codeword that minimizes the discrepancy from~$y$ is~$(1,0,0)$. 
\end{remark}

\begin{remark}
It is interesting to observe that
the condition in
Proposition~\ref{prop:suff}\ref{item1:suff}
is equivalent to $\dH(y,x)<\delta(C)/(\gamma+1)$,
which shows a connection between $\delta$ and the Hamming distance.
\end{remark}

\bigskip
\section{Unsuccessful Decoding}
\label{sec:PUD}
In this section we establish two upper bounds for the probability that the maximum likelihood decoder is unsuccessful. The first bound uses the notion of discrepancy (Definition~\ref{def:dis}), while the second uses the symmetric discrepancy (Definition~\ref{def:symdis}). We show that the two bounds are in general not comparable with each other, giving evidence that both discrepancy notions are relevant and independent concepts.

\begin{notation}
We let $\mS:=\{a+\gamma\, b \mid a,b \in \N \}$ and, for $h \in \N$, $\mS(h):=\{s \in \mS \mid 0\le s < h\}$. Moreover, for $a,b\in\R$ we let
	\begin{equation*}
         \Bin{a}{b}:=
         \begin{cases}
         \binom{a}{b} & \textup{if}\; a,b\in \N,\\
         \phantom{0}0 & \textup{otherwise}.
         \end{cases}
     \end{equation*}
\end{notation}

We start with the following preliminary result.

\begin{lemma}
\label{lem:isom}
Let $x,x' \in \F$ be vectors of the same Hamming weight. Let $i \in \{1,\ldots,n\}$ and $s \in \mS$. We have
\begin{equation*}
	|\{y \in \F \mid \wH(y)=i, \, \delta(y,x)=s \}| \, = \,  
|\{y \in \F \mid \wH(y)=i, \, \delta(y,x')=s \}|.
\end{equation*}
Moreover, if $j$ denotes the Hamming weight of $x$, then
\begin{equation*}
    |\{y \in \F \mid \wH(y)=i, \, \delta(y,x)=s \}| = \Bin{j}{\frac{i\gamma-s+j}{\gamma+1}}\Bin{n-j}{\frac{s-j+i}{\gamma+1}}.
\end{equation*}
\end{lemma}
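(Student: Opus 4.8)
The plan is to prove the explicit closed-form count first, and then observe that the invariance statement follows immediately, since the resulting formula depends on $x$ only through its Hamming weight $j$.

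First I would fix $x \in \F$ with $\wH(x) = j$ and split the $n$ coordinate positions into the support $\{i : x_i = 1\}$, of size $j$, and its complement, of size $n - j$. For a candidate $y \in \F$, write $u := d_{11}(y,x)$ and $v := d_{10}(y,x)$; then the partition forces $d_{01}(y,x) = j - u$ and $d_{00}(y,x) = (n-j) - v$. The two defining constraints become $\wH(y) = u + v = i$ and $\delta(y,x) = \gamma v + (j - u) = s$. The key step is to note that these form a linear system in $(u,v)$ whose coefficient matrix has determinant $\gamma + 1 \neq 0$ (recall $\gamma \ge 1$ by Lemma~\ref{lem:ineq}). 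Hence there is a unique real solution, namely $u = \frac{i\gamma - s + j}{\gamma + 1}$ and $v = \frac{s - j + i}{\gamma + 1}$, as one checks by direct substitution. This uniqueness is exactly what collapses the count to a single product of binomials rather than a sum.

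Next I would count the vectors $y$ realizing this unique pair. Since the two constraints are equivalent to prescribing $d_{11}(y,x) = u$ and $d_{10}(y,x) = v$, and such a $y$ is determined by choosing which $u$ of the $j$ support positions carry a $1$ and which $v$ of the $n - j$ complement positions carry a $1$, these independent choices give exactly $\binom{j}{u}\binom{n-j}{v}$ vectors whenever $u,v$ are integers with $0 \le u \le j$ and $0 \le v \le n - j$, and none otherwise. The $\Bin{\cdot}{\cdot}$ convention is designed to encode both degeneracies simultaneously: it vanishes when the lower argument fails to be a natural number (so irrational or negative values of $u,v$ contribute nothing), while the ordinary binomial already vanishes when the lower argument exceeds the upper. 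Substituting the values of $u$ and $v$ then yields the claimed formula.

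Finally, the invariance statement is immediate, because the closed form $\Bin{j}{\frac{i\gamma-s+j}{\gamma+1}}\Bin{n-j}{\frac{s-j+i}{\gamma+1}}$ involves $x$ only through $j = \wH(x)$; hence any two vectors of equal Hamming weight produce equal counts. The one point requiring care is precisely the uniqueness of $(u,v)$: one might worry that $s \in \mS$ could admit several representations $a + \gamma b$ with $a,b \in \N$, but the extra weight constraint $\wH(y) = i$ together with $\gamma + 1 \ne 0$ pins $(u,v)$ down completely, so no such ambiguity can arise. I expect this to be the main subtlety of the argument, the rest being bookkeeping on the two coordinate blocks.
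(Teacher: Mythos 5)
Your proof is correct and follows essentially the same route as the paper's: both reduce the count to choices of $d_{11}(y,x)$ and $d_{10}(y,x)$ on the support of $x$ and its complement, observe that the two constraints pin these down uniquely (you make the paper's remark ``fully determined'' explicit via the determinant $\gamma+1\neq 0$), and read off the invariance from the fact that the resulting formula depends on $x$ only through $j=\wH(x)$. Your handling of the $\Bin{\cdot}{\cdot}$ convention for non-integral or out-of-range solutions is also consistent with the paper's intent.
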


\begin{proof}
Let $x \in \F$ be a vector and let $j=\wH(x)$. 
It is not difficult to see that the number of vectors $y \in \F$ with 
$\wH(y)=i$ and $\delta(y,x)=s$ is
$$\sum_{\substack{0 \le a \le j \\ 0 \le b \le n-j \\ a+b=i \\ j-a+\gamma b =s}} \binom{j}{a}\binom{n-j}{b}.$$
This number does not depend on $x$ but only on $j$. Moreover, the values of $a$ and $b$ are fully determined by the final constraints in the sum, and so the formula reduces to the product of binomial coefficients given in the statement of the lemma. This concludes the proof.
\end{proof}

\begin{notation}\label{not_lamba}
In the sequel, for $i,j \in \N$ and $s \in \R$ we denote by $\lambda(i,j,s)$ the number of vectors $y \in \F$ of weight~$i$
and with $\delta(y,x)=s$, where $x \in \F$ is any vector of weight $j$. By Lemma \ref{lem:isom}, this quantity $\lambda(i,j,s)$ is well-defined and given by
\begin{equation*}
    \lambda(i,j,s)=\Bin{j}{\frac{i\gamma -s+j}{\gamma+1}}\Bin{n-j}{\frac{s-j+i}{\gamma+1}}.
\end{equation*}
\end{notation}

Let $C\subseteq \F$ be a code, 
$x\in C$.
The probability that the maximum likelihood decoder $D_C$ returns 
a vector $x'\neq x$ or $\fa$ is 
\begin{equation*}
    \sum_{\substack{
    y \in\F\\D_C(y)\neq x
    }}\PPn(y \mid x)\giu{,}
\end{equation*}
where we recall that $\PPn(y \mid x)$ is interpreted as the probability that 
$y\in\F$ is received, given that
$x\in C$ was sent.

Under the (standard) assumption that all the codewords are transmitted with the same probability, we define the following concept.

\begin{definition}
\label{def:PUD}
     The \textbf{probability of unsuccessful decoding} for a code $C \subseteq \F$ is the average
\begin{equation*}
    \textup{PUD}(C)=\frac{1}{|C|} \; \sum_{x\in C}\sum_{\substack{
    y \in\F\\D_C(y)\neq x
    }}\PPn(y \mid x).
\end{equation*}
\end{definition}

We will study the $\textnormal{PUD}$ associated with a code $C$ in connection with its weight distribution. The latter is defined as follows.

\begin{definition}
\label{def:weightdist}
For a code $C \subseteq \F$ and an integer $i \in \{0,\ldots,n\}$, we denote by $\WH_i(C)$ the number of codewords $x \in C$ with Hamming weight $\wH(x)=i$. The tuple $(\WH_0(C),\ldots,\WH_n(C))$ is the \textbf{weight distribution} of $C$.
\end{definition}

We are now ready to establish the main result of this section.

\begin{theorem}
\label{thm:PUD}
Let $C \subseteq \F$ be a code. We have: 
\begin{enumerate}[label={(\arabic*)}]
    \item $\displaystyle \textup{PUD}(C) \le 1-\frac{1}{|C|}
\sum_{j=0}^n \WH_j(C) \sum_{i=0}^n (1-q)^i (1-p)^{n-i}\sum_{s \in \mS\l\frac{\delta(C)+(\gamma-1)(i-j)}{2}\r} \left( \frac{q}{1-p}  \right)^{s} \lambda(i,j,s)$,
    \item $\displaystyle \textup{PUD}(C) \le 1-\frac{1}{|C|}\sum_{j=0}^n \WH_j(C) \sum_{i=0}^n (1-q)^i (1-p)^{n-i}\sum_{s \in \mS\l\frac{\hat\delta(C)+i(\gamma-1)}{2}\r} \left( \frac{q}{1-p}  \right)^{s} \lambda(i,j,s) $. 
\end{enumerate}
\end{theorem}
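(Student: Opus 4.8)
The plan is to pass to the complementary event and bound the probability of \emph{correct} decoding from below. Since the events $D_C(y)=x$ and $D_C(y)\neq x$ partition $\F$ for each fixed $x$, Definition~\ref{def:PUD} gives
\begin{equation*}
    \textup{PUD}(C)=1-\frac{1}{|C|}\sum_{x \in C} \sum_{\substack{y \in \F \\ D_C(y)=x}} \PPn(y \mid x),
\end{equation*}
so it suffices to lower bound the inner double sum. For this I would invoke Proposition~\ref{prop:suff}: every $y$ satisfying Proposition~\ref{prop:suff}\ref{item1:suff} (respectively~\ref{item2:suff}) already forces $D_C(y)=x$, so restricting the sum over $y$ to those vectors only decreases it. This restriction is precisely the source of the inequality, since the sufficient conditions of Proposition~\ref{prop:suff} are not necessary.

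The crux of the argument is the observation that $\PPn(y \mid x)$ depends on $y$ only through the pair $(\wH(y),\delta(y,x))$. Using Lemma~\ref{lem:Pxy} together with the defining identity $\left(\frac{q}{1-p}\right)^{\gamma}=\frac{p}{1-q}$, one rewrites
\begin{equation*}
    \left(\frac{q}{1-p}\right)^{d_{01}(y,x)}\left(\frac{p}{1-q}\right)^{d_{10}(y,x)}=\left(\frac{q}{1-p}\right)^{d_{01}(y,x)+\gamma\, d_{10}(y,x)}=\left(\frac{q}{1-p}\right)^{\delta(y,x)},
\end{equation*}
so that for every $y$ with $\wH(y)=i$ and $\delta(y,x)=s$ one has $\PPn(y \mid x)=\left(\frac{q}{1-p}\right)^{s}(1-q)^{i}(1-p)^{n-i}$, a quantity that no longer depends on the particular $y$.

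With this in hand I would organize the restricted sum by the values $i=\wH(y)$ and $s=\delta(y,x)$. For part~(1) the condition in Proposition~\ref{prop:suff}\ref{item1:suff} reads $s<\frac{\delta(C)+(\gamma-1)(i-j)}{2}$ with $j=\wH(x)$; since $\delta(y,x)\ge 0$ always, the admissible values of $s$ are exactly those in $\mS\!\left(\frac{\delta(C)+(\gamma-1)(i-j)}{2}\right)$, matching the strict upper bound in the definition of $\mS(\cdot)$. The number of $y$ with $\wH(y)=i$ and $\delta(y,x)=s$ is $\lambda(i,j,s)$ by Notation~\ref{not_lamba}, and by Lemma~\ref{lem:isom} this count depends only on $j=\wH(x)$. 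Hence, grouping the codewords $x\in C$ by their weight $j$ through the weight distribution $\WH_j(C)$ turns the lower bound on the average correct probability into
\begin{equation*}
    \frac{1}{|C|}\sum_{j=0}^n \WH_j(C)\sum_{i=0}^n (1-q)^i(1-p)^{n-i}\sum_{s \in \mS\left(\frac{\delta(C)+(\gamma-1)(i-j)}{2}\right)}\left(\frac{q}{1-p}\right)^{s}\lambda(i,j,s),
\end{equation*}
which yields~(1). Part~(2) is identical after replacing Proposition~\ref{prop:suff}\ref{item1:suff} by~\ref{item2:suff}, so that the bound on $s$ becomes $\frac{\hat\delta(C)+i(\gamma-1)}{2}$. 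No single step is genuinely hard; the main obstacle is the bookkeeping, namely verifying that the strict inequalities in Proposition~\ref{prop:suff} align with the strict condition $s<h$ defining $\mS(h)$, and that the weight-class invariance of $\lambda(i,j,s)$ legitimately permits the final grouping by $\WH_j(C)$.
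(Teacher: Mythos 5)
Your proposal is correct and follows essentially the same route as the paper's proof: pass to the complementary (correct-decoding) probability, restrict the inner sum via the sufficient conditions of Proposition~\ref{prop:suff}, rewrite $\PPn(y\mid x)$ as $\left(\frac{q}{1-p}\right)^{\delta(y,x)}(1-q)^{\wH(y)}(1-p)^{n-\wH(y)}$ using Lemma~\ref{lem:Pxy} and the definition of $\gamma$, and then regroup by $i=\wH(y)$, $j=\wH(x)$, and $s=\delta(y,x)$ using $\lambda(i,j,s)$ and the weight distribution. The bookkeeping points you flag (strict inequalities matching the definition of $\mS(h)$, and the weight-class invariance of $\lambda$) are exactly the ones the paper relies on, and both check out.
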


\begin{proof}
We start by showing 
    the first bound in the statement. 
    By Lemma \ref{lem:Pxy} and the definitions of $\gamma$ and $\delta$ (Notation~\ref{mainnot} and  Definition~\ref{def:dis} respectively), we have 
\begin{align}
\textup{PUD}(C) &=
		1-\frac{1}{|C|}\sum_{x\in C}\sum_{\substack{y\in\F\\D(y)=x}}\PP^n(y\mid x) \nonumber \\
		&=1-\frac{1}{|C|}\sum_{x\in C}\sum_{\substack{y\in\F\\D(y)=x}} \left(\frac{q}{1-p}\right)^{\delta(y,x)}(1-q)^{\wH(y)}(1-p)^{n-\wH(y)}. \label{ppp}
\end{align}
Proposition \ref{prop:suff} implies
\begin{equation}
\label{eq:incl}
	\{y\in\F\;|\;\delta(y,x)<a(y,x)\}\subseteq\{y\in\F\;|\;D(y)=x\} \quad \mbox{for all $x\in C$},
\end{equation}
where
\begin{equation*}
	a(y,x)=\frac{\delta(C)+(\wH(y)-\wH(x))(\gamma-1)}{2}.
\end{equation*}
Combining \eqref{ppp} with \eqref{eq:incl} we then obtain 
\begin{equation*}
\begin{aligned}
	  \textup{PUD}(C)&\leq 1-\frac{1}{|C|}\sum_{x\in C}\sum_{\substack{y\in\F\\\delta(y,x)<a(y,x)}} \left(\frac{q}{1-p}\right)^{\delta(y,x)}(1-q)^{\wH(y)}(1-p)^{n-\wH(y)}\\
	&=1-\frac{1}{|C|}\sum_{j=0}^n\sum_{\substack{x\in C\\ \wH(x)=j}}\sum_{i=0}^n\sum_{s\in \mS\l\frac{\delta(C)+(\gamma-1)(i-j)}{2}\r}\sum_{\substack{y\in\F\\\wH(y)=i\\\delta(y,x)=s}}\left(\frac{q}{1-p}\right)^{s}(1-q)^{i}(1-p)^{n-i}.
\end{aligned}
\end{equation*}
Finally, using Notation \ref{not_lamba} we conclude that
\begin{equation*}
	\begin{aligned}
		  \textup{PUD}(C)&\leq 1-\frac{1}{|C|}\sum_{j=0}^n\sum_{\substack{x\in C\\ \wH(x)=j}}\sum_{i=0}^n\sum_{s\in \mS\l\frac{\delta(C)+(\gamma-1)(i-j)}{2}\r}\left(\frac{q}{1-p}\right)^{s}(1-q)^{i}(1-p)^{n-i}\lambda(i,j,s)\\
		&=1-\frac{1}{|C|}\sum_{j=0}^n\WH_j(C)\sum_{i=0}^n\sum_{s\in \mS\l\frac{\delta(C)+(\gamma-1)(i-j)}{2}\r}\left(\frac{q}{1-p}\right)^{s}(1-q)^{i}(1-p)^{n-i}\lambda(i,j,s),
	\end{aligned}
\end{equation*}
which is precisely the first bound in the statement of the theorem.
The proof of the second bound is analogous, using the fact that Proposition~\ref{prop:suff} implies 
\begin{equation}
\label{eq:inclb}
	\{y\in\F\;|\;\delta(y,x)<b(y,x)\}\subseteq\{y\in\F\;|\;D(y)=x\} \quad \mbox{for all $x\in C$},
\end{equation}
where
\begin{equation*}
	b(y,x)=\frac{\hat\delta(C)+\wH(y)(\gamma-1)}{2}. \qedhere
\end{equation*}
\end{proof}

\begin{example}
\label{ex:PUD_4}
    Let $C=\{(0,1,0,1,1),(1,1,0,0,0),(1,0,1,1,1)\}$ and $p=0.1$. Figure \ref{fig:PUD4} shows how the values of the two bounds of Theorem \ref{thm:PUD} change as $q$ ranges between $0.1$ and $0.49$. Moreover, one can also observe the relation between our bound and the trivial bound on the PUD stated above.
    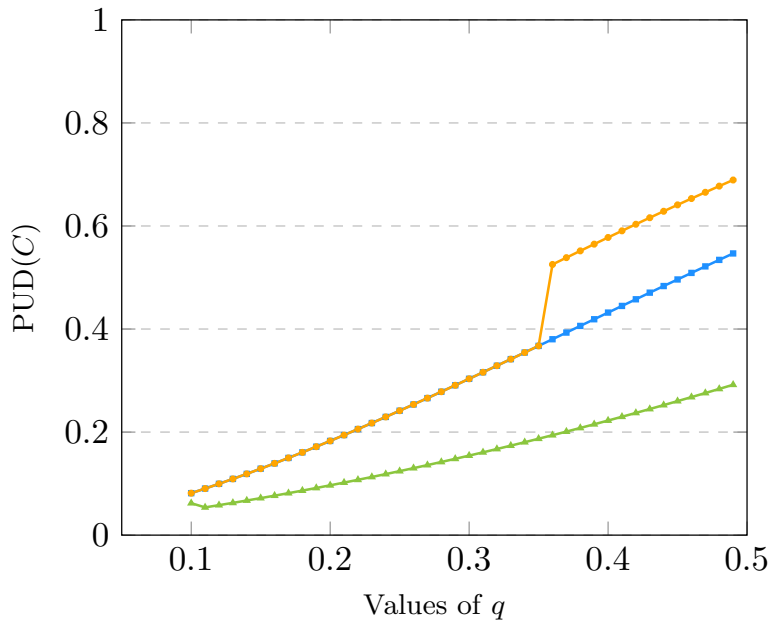
\begin{figure}[H]
\centering
\begin{tikzpicture}[scale=1.2]
\begin{axis}[
    xlabel={\footnotesize{Values of $q$}},
    ylabel={\footnotesize{PUD$(C)$}},
    xmin=0.05, xmax=0.5,
    ymin=0, ymax=1,
    xtick={0,0.1,0.2,0.3,0.4,0.5},
    ytick={0,0.2,0.4,0.6,0.8,1},
    legend pos=north west,
    ymajorgrids=true,
    grid style=dashed,
    every axis plot/.append style={thick},
    every mark/.append style={solid},
]
\addplot+[color=DodgerBlue1,mark=square*, mark size=0.6pt]
coordinates {
(0.100000, 0.0814600)
(0.110000, 0.0903676)
(0.120000, 0.0995849)
(0.130000, 0.109098 )
(0.140000, 0.118893 )
(0.150000, 0.128955 )
(0.160000, 0.139273 )
(0.170000, 0.149832 )
(0.180000, 0.160619 )
(0.190000, 0.171621 )
(0.200000, 0.182827 )
(0.210000, 0.194222 )
(0.220000, 0.205796 )
(0.230000, 0.217537 )
(0.240000, 0.229431 )
(0.250000, 0.241469 )
(0.260000, 0.253638 )
(0.270000, 0.265927 )
(0.280000, 0.278325 )
(0.290000, 0.290822 )
(0.300000, 0.303407 )
(0.310000, 0.316069 )
(0.320000, 0.328799 )
(0.330000, 0.341587 )
(0.340000, 0.354422 )
(0.350000, 0.367295 )
(0.360000, 0.380198 )
(0.370000, 0.393121 )
(0.380000, 0.406054 )
(0.390000, 0.418990 )
(0.400000, 0.431920 )
(0.410000, 0.444836 )
(0.420000, 0.457728 )
(0.430000, 0.470591 )
(0.440000, 0.483416 )
(0.450000, 0.496195 )
(0.460000, 0.508923 )
(0.470000, 0.521589 )
(0.480000, 0.534190 )
(0.490000, 0.546719 )
};

\addplot+[color=Orange1,mark=*,mark size=0.7pt]
coordinates {
(0.100000, 0.0814600)
(0.110000, 0.0903676)
(0.120000, 0.0995849)
(0.130000, 0.109098 )
(0.140000, 0.118893 )
(0.150000, 0.128955 )
(0.160000, 0.139273 )
(0.170000, 0.149832 )
(0.180000, 0.160619 )
(0.190000, 0.171621 )
(0.200000, 0.182827 )
(0.210000, 0.194222 )
(0.220000, 0.205796 )
(0.230000, 0.217537 )
(0.240000, 0.229431 )
(0.250000, 0.241469 )
(0.260000, 0.253638 )
(0.270000, 0.265927 )
(0.280000, 0.278325 )
(0.290000, 0.290822 )
(0.300000, 0.303407 )
(0.310000, 0.316069 )
(0.320000, 0.328799 )
(0.330000, 0.341587 )
(0.340000, 0.354422 )
(0.350000, 0.367295 )
(0.360000, 0.525350 )
(0.370000, 0.538556 )
(0.380000, 0.551692 )
(0.390000, 0.564750 )
(0.400000, 0.577720 )
(0.410000, 0.590595 )
(0.420000, 0.603367 )
(0.430000, 0.616027 )
(0.440000, 0.628568 )
(0.450000, 0.640983 )
(0.460000, 0.653264 )
(0.470000, 0.665405 )
(0.480000, 0.677399 )
(0.490000, 0.689238 )
};

\addplot+[color=LimeGreen,mark=triangle*,mark size=0.7pt]
coordinates {
(0.100000, 0.0620203)
(0.110000, 0.0538406)
(0.120000, 0.0581808)
(0.130000, 0.0626268)
(0.140000, 0.0671759)
(0.150000, 0.0718298)
(0.160000, 0.0765877)
(0.170000, 0.0814514)
(0.180000, 0.0864229)
(0.190000, 0.0914965)
(0.200000, 0.0966797)
(0.210000, 0.101970 )
(0.220000, 0.107365 )
(0.230000, 0.112865 )
(0.240000, 0.118473 )
(0.250000, 0.124188 )
(0.260000, 0.130009 )
(0.270000, 0.135936 )
(0.280000, 0.141967 )
(0.290000, 0.148105 )
(0.300000, 0.154346 )
(0.310000, 0.160693 )
(0.320000, 0.167145 )
(0.330000, 0.173698 )
(0.340000, 0.180354 )
(0.350000, 0.187112 )
(0.360000, 0.193973 )
(0.370000, 0.200931 )
(0.380000, 0.207990 )
(0.390000, 0.215146 )
(0.400000, 0.222400 )
(0.410000, 0.229749 )
(0.420000, 0.237195 )
(0.430000, 0.244733 )
(0.440000, 0.252362 )
(0.450000, 0.260085 )
(0.460000, 0.267895 )
(0.470000, 0.275794 )
(0.480000, 0.283778 )
(0.490000, 0.291847 )
};
\end{axis}
\end{tikzpicture}
\caption{\label{fig:PUD4} The first (blue) and second (orange) bound 
of Theorem~\ref{thm:PUD} for the code of Example~\ref{ex:PUD_4}, $p=0.1$ and some values of $q$. The plots show that the two bounds are not comparable in general. We also include the exact value of the PUD (green).}
\end{figure}
\end{example}

Notice that the two bounds of Theorem~\ref{thm:PUD} are not comparable in general, as the following example shows.  
\begin{example}
Take $p=0.1$ and $q=0.3$.
\begin{enumerate}[label={(\arabic*)}]
    \item Let $C=\{(0,0,1),(0,1,0),(1,1,1)\}$  and we have $\mbox{PUD}(C)\approx 0.34$. The first bound in Theorem \ref{thm:PUD} for   $\mbox{PUD}(C)$ is 
    $\approx 0.51$, while the second bound is $\approx 0.36$.
	\item Let $C=\{(0,0,0),(0,1,1),(1,1,1)\}$  and we have $\mbox{PUD}(C)\approx 0.24$. The first bound in Theorem~\ref{thm:PUD} for   $\mbox{PUD}(C)$ is $\approx 0.50$, while the second bound is $\approx 0.74$.
\end{enumerate}
\end{example}

It turns out that the two bounds of Theorem~\ref{thm:PUD} are incomparable even when $p$ and $C$ are fixed,
and $q$ varies. We illustrate this with the following examples.

\begin{example}
\label{ex:PUD}
    Let $C=\{(1,1,1,1),(1,0,0,1),(0,0,0,0)\}$ and $p=0.1$. Figure \ref{fig:PUD} shows how the values of the two bounds of Theorem~\ref{thm:PUD} change as $q$ ranges between $0.1$ and $0.49$.
    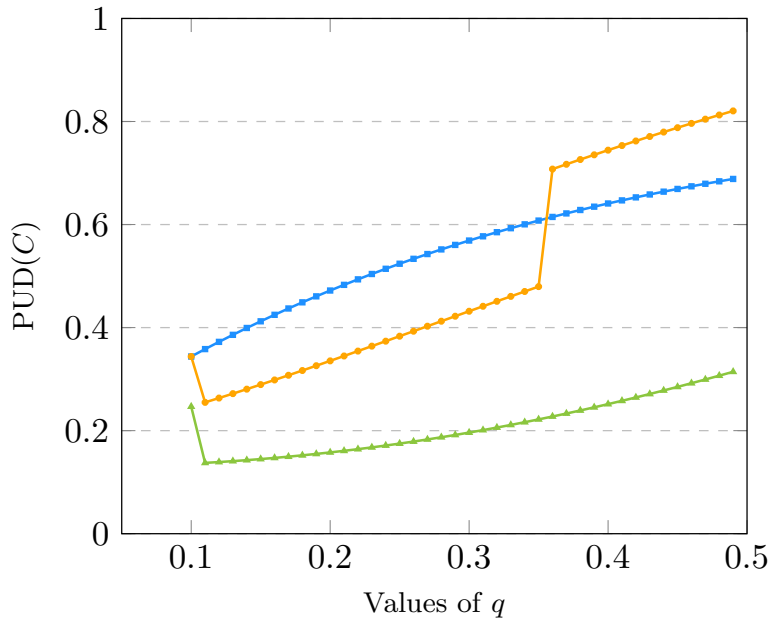
\begin{figure}[H]
\centering
\begin{tikzpicture}[scale=1.2]
\begin{axis}[
    xlabel={\footnotesize{Values of $q$}},
    ylabel={\footnotesize{PUD$(C)$}},
    xmin=0.05, xmax=0.5,
    ymin=0, ymax=1,
    xtick={0,0.1,0.2,0.3,0.4,0.5},
    ytick={0,0.2,0.4,0.6,0.8,1},
    legend pos=north west,
    ymajorgrids=true,
    grid style=dashed,
    every axis plot/.append style={thick},
    every mark/.append style={solid},
]
\addplot+[color=DodgerBlue1,mark=square*, mark size=0.6pt]
coordinates {
    ( 0.100000, 0.343900 )
    ( 0.110000, 0.358292 )
    ( 0.120000, 0.372313 )
    ( 0.130000, 0.385971 )
    ( 0.140000, 0.399272 )
    ( 0.150000, 0.412223 )
    ( 0.160000, 0.424831 )
    ( 0.170000, 0.437103 )
    ( 0.180000, 0.449045 )
    ( 0.190000, 0.460664 )
    ( 0.200000, 0.471967 )
    ( 0.210000, 0.482959 )
    ( 0.220000, 0.493649 )
    ( 0.230000, 0.504040 )
    ( 0.240000, 0.514141 )
    ( 0.250000, 0.523956 )
    ( 0.260000, 0.533493 )
    ( 0.270000, 0.542756 )
    ( 0.280000, 0.551752 )
    ( 0.290000, 0.560488 )
    ( 0.300000, 0.568967 )
    ( 0.310000, 0.577196 )
    ( 0.320000, 0.585180 )
    ( 0.330000, 0.592927 )
    ( 0.340000, 0.600439 )
    ( 0.350000, 0.607723 )
    ( 0.360000, 0.614784 )
    ( 0.370000, 0.621627 )
    ( 0.380000, 0.628258 )
    ( 0.390000, 0.634680 )
    ( 0.400000, 0.640900 )
    ( 0.410000, 0.646922 )
    ( 0.420000, 0.652750 )
    ( 0.430000, 0.658390 )
    ( 0.440000, 0.663846 )
    ( 0.450000, 0.669123 )
    ( 0.460000, 0.674225 )
    ( 0.470000, 0.679155 )
    ( 0.480000, 0.683920 )
    ( 0.490000, 0.688522 )
    };
\addplot+[color=Orange1,mark=*,mark size=0.7pt]
coordinates {
    ( 0.100000, 0.343900 )
    ( 0.110000, 0.254897 )
    ( 0.120000, 0.263278 )
    ( 0.130000, 0.271831 )
    ( 0.140000, 0.280541 )
    ( 0.150000, 0.289398 )
    ( 0.160000, 0.298388 )
    ( 0.170000, 0.307498 )
    ( 0.180000, 0.316716 )
    ( 0.190000, 0.326032 )
    ( 0.200000, 0.335433 )
    ( 0.210000, 0.344909 )
    ( 0.220000, 0.354446 )
    ( 0.230000, 0.364037 )
    ( 0.240000, 0.373668 )
    ( 0.250000, 0.383331 )
    ( 0.260000, 0.393015 )
    ( 0.270000, 0.402710 )
    ( 0.280000, 0.412406 )
    ( 0.290000, 0.422095 )
    ( 0.300000, 0.431767 )
    ( 0.310000, 0.441412 )
    ( 0.320000, 0.451023 )
    ( 0.330000, 0.460591 )
    ( 0.340000, 0.470108 )
    ( 0.350000, 0.479565 )
    ( 0.360000, 0.707655 )
    ( 0.370000, 0.716970 )
    ( 0.380000, 0.726205 )
    ( 0.390000, 0.735350 )
    ( 0.400000, 0.744400 )
    ( 0.410000, 0.753348 )
    ( 0.420000, 0.762188 )
    ( 0.430000, 0.770913 )
    ( 0.440000, 0.779518 )
    ( 0.450000, 0.787998 )
    ( 0.460000, 0.796347 )
    ( 0.470000, 0.804559 )
    ( 0.480000, 0.812631 )
    ( 0.490000, 0.820557 )
};
\addplot+[color=LimeGreen,mark=triangle*,mark size=0.7pt]
coordinates {
    ( 0.100000, 0.246700 )
    ( 0.110000, 0.137457 )
    ( 0.120000, 0.139068 )
    ( 0.130000, 0.140837 )
    ( 0.140000, 0.142765 )
    ( 0.150000, 0.144854 )
    ( 0.160000, 0.147106 )
    ( 0.170000, 0.149522 )
    ( 0.180000, 0.152103 )
    ( 0.190000, 0.154851 )
    ( 0.200000, 0.157767 )
    ( 0.210000, 0.160851 )
    ( 0.220000, 0.164105 )
    ( 0.230000, 0.167529 )
    ( 0.240000, 0.171123 )
    ( 0.250000, 0.174887 )
    ( 0.260000, 0.178823 )
    ( 0.270000, 0.182929 )
    ( 0.280000, 0.187205 )
    ( 0.290000, 0.191651 )
    ( 0.300000, 0.196267 )
    ( 0.310000, 0.201051 )
    ( 0.320000, 0.206003 )
    ( 0.330000, 0.211122 )
    ( 0.340000, 0.216406 )
    ( 0.350000, 0.221854 )
    ( 0.360000, 0.227465 )
    ( 0.370000, 0.233237 )
    ( 0.380000, 0.239168 )
    ( 0.390000, 0.245257 )
    ( 0.400000, 0.251500 )
    ( 0.410000, 0.257896 )
    ( 0.420000, 0.264442 )
    ( 0.430000, 0.271136 )
    ( 0.440000, 0.277974 )
    ( 0.450000, 0.284954 )
    ( 0.460000, 0.292073 )
    ( 0.470000, 0.299327 )
    ( 0.480000, 0.306713 )
    ( 0.490000, 0.314228 )
};
\end{axis}
\end{tikzpicture}
\caption{\label{fig:PUD} The first (blue) and second (orange) bound 
of Theorem~\ref{thm:PUD} for the code of Example~\ref{ex:PUD}, $p=0.1$ and some values of $q$. The plots show that the two bounds are not comparable in general. We also include the exact value of the PUD (green).}
\end{figure}
\end{example}

\bigskip
\section{Bounds}
\label{sec:bounds}

In this section we investigate the cardinality of a code $C \subseteq \F$ as a function of the fundamental parameters $n$, $\gamma$, and $\delta(C)$ or $\hat\delta(C)$. 
We obtain various bounds and compare them with each other. We also give examples of codes meeting them.

\begin{remark}
    Before presenting our statements and their proofs, we illustrate why several arguments classically applied to study codes with the Hamming metric do \textit{not} extend to the case of our interest (codes endowed with  discrepancy functions). Many of these arguments rely, implicitly or explicitly, on the regularity properties of certain graphs defined via the Hamming metric. Unfortunately, the natural analogues of these graphs for discrepancy functions do not exhibit the same regularity.
    
    Specifically, it is natural to interpret codes with minimum (symmetric) discrepancy bounded from below as cliques in the graphs $\mG(n,\delta)$ and 
    $\hat\mG(n,\delta)$ defined as follows. The vertices of both graphs are the elements of $\F$. Vertices $x$, $y$ are adjacent in $\mG(n,\delta)$
    if $\min\{\delta(y,x),\delta(x,y)\}\geq \delta$ {and $x\neq y$}, and are adjacent in $\hat\mG(n,\delta)$ if $\hat\delta(x,y)\geq \delta$ {and $x\neq y$}. When $\gamma=1$, the graph $\mG(n,\delta)$ is \textit{vertex-transitive},
    a property that greatly simplifies the study of code parameters via anticliques; see in particular~\cite{el2007bounds}.
    
   As the next two examples illustrate, the graphs $\mG(n,\delta)$ and $\hat\mG(n,\delta)$ are in general not vertex-transitive, preventing the 
    application of the arguments mentioned above.
\end{remark}

\begin{example} 
\begin{enumerate}[label={(\arabic*)}]
    \item Let $p=0.1$ and $q=0.2$, from which $\gamma\approx 1.38$. 
    A graphical representation of $\mG(4,\gamma+1)$, where the vertex $u$ corresponds to the unique element $x=(x_0,x_1,x_2,x_3)$ of $\mathbb{F}_2^4$ with $x_0+2x_1+4x_2+8x_3 = u$, can be found in Figure~\ref{f:1}. One can check that $\mG(4,\gamma+1)$ is not vertex-transitive (e.g., vertices $0$ and $1$ have different degrees).

\item 	Let $p=0.1$ and $q=0.3$, which gives $\gamma\approx 1.77$. A graphical representation of $\hat\mG(3,1)$ can be seen in Figure~\ref{f:2}, where the vertex $u$ corresponds to the unique element $x=(x_0,x_1,x_2)$ of $\mathbb{F}_2^3$ with $x_0+2x_1+4x_2= u$. Again,
$\hat\mG(3,1)$ is not vertex-transitive (again, vertices $0$ and $1$ have different degrees).
\end{enumerate}
\end{example}

\begin{figure}[h!]
\begin{center}
\begin{tikzpicture}
[scale=0.8,start chain=circle placed {at=(\tikzchaincount*22.5:6)},regular/.style={draw,circle,inner sep=0,minimum size=0.7cm,fill=DodgerBlue1!20},rotate=90]
	\foreach \i in {15,...,0}
 	 \node [on chain, regular] (\i) {\i};
 	 
 	\path[draw,thick]
 	(0) edge node {} (7)
 	(0) edge node {} (11)
 	(0) edge node {} (13)
 	(0) edge node {} (14)
 	(0) edge node {} (15)
 	
 	(1) edge node {} (2)
 	(1) edge node {} (4)
 	(1) edge node {} (6)
 	(1) edge node {} (8)
 	(1) edge node {} (10)
 	(1) edge node {} (12)
 	(1) edge node {} (14)
 	(1) edge node {} (15)
 	
 	(2) edge node {} (4)
 	(2) edge node {} (5)
 	(2) edge node {} (8)
 	(2) edge node {} (9)
 	(2) edge node {} (12)
 	(2) edge node {} (13)
 	(2) edge node {} (15)
 	
 	(3) edge node {} (4)
 	(3) edge node {} (5)
 	(3) edge node {} (6)
 	(3) edge node {} (8)
 	(3) edge node {} (9)
 	(3) edge node {} (10)
 	(3) edge node {} (12)
 	(3) edge node {} (13)
 	(3) edge node {} (14)
 	
 	(4) edge node {} (8)
 	(4) edge node {} (9)
 	(4) edge node {} (10)
 	(4) edge node {} (11)
 	(4) edge node {} (15)
 	
 	(5) edge node {} (6)
 	(5) edge node {} (8)
 	(5) edge node {} (9)
 	(5) edge node {} (10)
 	(5) edge node {} (11)
 	(5) edge node {} (12)
 	(5) edge node {} (14)
 	
 	(6) edge node {} (8)
 	(6) edge node {} (9)
 	(6) edge node {} (10)
 	(6) edge node {} (11)
 	(6) edge node {} (12)
 	(6) edge node {} (13)
 	
 	(7) edge node {} (8)
 	(7) edge node {} (9)
 	(7) edge node {} (10)
 	(7) edge node {} (11)
 	(7) edge node {} (12)
 	(7) edge node {} (13)
 	(7) edge node {} (14)
 	
 	(8) edge node {} (15)
 	
 	(9) edge node {} (10)
 	(9) edge node {} (12)
 	(9) edge node {} (14)
 	
 	(10) edge node {} (12)
 	(10) edge node {} (13)
 	
 	(11) edge node {} (12)
 	(11) edge node {} (13)
 	(11) edge node {} (14)
 	
 	(13) edge node {} (14)
 	;
\end{tikzpicture}
\end{center}
\caption{}
\label{f:1}
\end{figure}

\begin{figure}[h!]
\begin{center}
\begin{tikzpicture}
    [scale=0.8,start chain=circle placed {at=(\tikzchaincount*44.8:5)},regular/.style={draw,circle,inner sep=0,minimum size=0.7cm,fill=DodgerBlue1!20},rotate=90]
	\foreach \i in {7,...,0}
 	 \node [on chain, regular] (\i) {\i};
 	 
 	 \path[draw,thick]
	(0) edge node {} (1)
	(0) edge node {} (2)
	(0) edge node {} (3)
	(0) edge node {} (4)
	(0) edge node {} (5)
	(0) edge node {} (6)
	(0) edge node {} (7)
	
	(1) edge node {} (2)
	(1) edge node {} (4)
	(1) edge node {} (6)
	(1) edge node {} (7)
	
	(2) edge node {} (4)
	(2) edge node {} (5)
	(2) edge node {} (7)
	
	(3) edge node {} (4)
	(3) edge node {} (5)
	(3) edge node {} (6)
	
	(4) edge node {} (7)
	
	(5) edge node {} (6)
	;
\end{tikzpicture}
\end{center}
\caption{}
\label{f:2}
\end{figure}

First bounds for codes of  minimum discrepancy bounded from below
can be obtained from bounds for codes with the Hamming metric. We start by showing how
the functions $\delta$ and $\dH$ relate to each other.

\begin{proposition} \label{three}
Let $x,y \in \F$. The following hold:
\begin{enumerate}[label={(\arabic*)}]
    \item \label{itm:one} $\hat\delta(x,y) \le \dH(x,y) \le \delta(x,y)$,
    \item \label{itm:two} $\delta(x,y)+\delta(y,x) = (\gamma+1) \, \dH(x,y)$,
    \item \label{itm:three} $\min\{\delta(x,y), \delta(y,x)\} \le \frac{\gamma+1}{2} \, \dH(x,y) $.
\end{enumerate}
In particular, for all codes $C \subseteq \F$ we have
\begin{equation*}
    \hat\delta(C) \le \dH(C) \le \delta(C) \le \frac{\gamma+1}{2} \, \dH(C).
\end{equation*}
\end{proposition}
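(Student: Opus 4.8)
The plan is to establish the three pointwise claims \ref{itm:one}--\ref{itm:three} directly from the Hamming-distance representations of $\delta$ and $\hat\delta$ collected in Lemma~\ref{lem:relations}, and then to deduce the code-level chain by evaluating these at suitably chosen extremal pairs of codewords. The only numerical input is $\gamma \ge 1$ from Lemma~\ref{lem:ineq}, which makes the factor $\gamma-1$ nonnegative throughout.

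For \ref{itm:one}, I would read off from Lemma~\ref{lem:relations}\ref{gg} that $\delta(x,y) = \dH(x,y) + (\gamma-1)d_{10}(x,y)$; since the correction term $(\gamma-1)d_{10}(x,y)$ is nonnegative, this yields $\dH(x,y) \le \delta(x,y)$. Symmetrically, Lemma~\ref{lem:relations}\ref{hh} gives $\hat\delta(x,y) = \dH(x,y) - (\gamma-1)d_{11}(x,y)$, and nonnegativity of $(\gamma-1)d_{11}(x,y)$ forces $\hat\delta(x,y) \le \dH(x,y)$. For \ref{itm:two}, I would sum the expression from Lemma~\ref{lem:relations}\ref{gg} for $\delta(x,y)$ and for $\delta(y,x)$ and use the elementary identity $d_{10}(x,y) + d_{10}(y,x) = \dH(x,y)$, which holds because $d_{10}(y,x) = d_{01}(x,y)$ and $\dH(x,y) = d_{10}(x,y) + d_{01}(x,y)$; the sum then collapses to $2\dH(x,y) + (\gamma-1)\dH(x,y) = (\gamma+1)\dH(x,y)$. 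Statement \ref{itm:three} is then immediate from \ref{itm:two}, since the minimum of two reals never exceeds their mean: $\min\{\delta(x,y),\delta(y,x)\} \le \tfrac12\big(\delta(x,y)+\delta(y,x)\big) = \tfrac{\gamma+1}{2}\dH(x,y)$.

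For the ``in particular'' chain, I would transfer each pointwise bound to the minima by selecting the pair realizing the parameter on the right-hand side and using that the left-hand parameter, being a minimum over pairs, is bounded by its value at that pair. To get $\dH(C) \le \delta(C)$, take distinct $x,x'$ realizing $\delta(C)$ and chain $\dH(C) \le \dH(x,x') \le \delta(x,x') = \delta(C)$ via \ref{itm:one}. To get $\hat\delta(C) \le \dH(C)$, take distinct $x,x'$ realizing $\dH(C)$ and chain $\hat\delta(C) \le \hat\delta(x,x') \le \dH(x,x') = \dH(C)$. For the final inequality, take distinct $x,x'$ realizing $\dH(C)$ and combine \ref{itm:three} with the fact that both $\delta(x,x')$ and $\delta(x',x)$ are discrepancies of distinct codeword pairs, so that $\delta(C) \le \min\{\delta(x,x'),\delta(x',x)\} \le \tfrac{\gamma+1}{2}\dH(x,x') = \tfrac{\gamma+1}{2}\dH(C)$.

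I do not expect a genuine obstacle: every step is a one-line consequence of Lemma~\ref{lem:relations} or a mean/min comparison. The only point demanding care is the bookkeeping at the code level, where for each link one must pick the extremal pair for the parameter on the larger side and exploit that $\delta(C)$ is a minimum over \emph{ordered} pairs, so that both $\delta(x,x')$ and $\delta(x',x)$ bound it from above; choosing the wrong pair or forgetting this asymmetry would break the direction of the chain.
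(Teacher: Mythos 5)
Your proposal is correct and follows essentially the same route as the paper: parts \ref{itm:one} and \ref{itm:three} are argued identically (via Lemma~\ref{lem:relations}\ref{gg}--\ref{hh} and the min-versus-mean comparison), and the code-level chain is handled with exactly the same choice of extremal pairs, including the key observation that $\delta(C)$ is a minimum over ordered pairs so both $\delta(x,x')$ and $\delta(x',x)$ bound it from above. The only cosmetic difference is in \ref{itm:two}, where you sum the representation $\delta=\dH+(\gamma-1)d_{10}$ instead of expanding the raw definition of $\delta$ as the paper does; both collapse to $(\gamma+1)\dH(x,y)$ in one line.
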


\begin{proof}
    Let $x,y\in\F$ and recall that $\gamma\geq 1$ by Lemma \ref{lem:ineq}. By Lemma \ref{lem:relations}\ref{hh}, we have 
    \begin{equation*}
         \hat\delta(x,y)=\dH(x,y)-(\gamma-1)d_{11}(x,y)\leq \dH(x,y)
    \end{equation*}
    and, by Lemma \ref{lem:relations}\ref{gg}, we have
    \begin{equation*}
        \dH(x,y)\leq \dH(x,y)+(\gamma-1)d_{10}(x,y)=\delta(x,y).
    \end{equation*}
   This implies \ref{itm:one}. Moreover, by the very definition of $\delta$ we have
	\begin{align*}
		\delta(x,y)+\delta(y,x)&= d_{01}(x,y)+\gamma\, d_{10}(x,y) +d_{01}(y,x)+\gamma\, d_{10}(y,x)\\
		&= d_{01}(x,y)+\gamma\, d_{10}(x,y) +d_{10}(x,y)+\gamma\, d_{01}(x,y)\\
		&=(\gamma+1)\,(d_{01}(x,y)+d_{10}(x,y))\\
		&=(\gamma+1)\,\dH(x,y),
	\end{align*}
	establishing \ref{itm:two}. We also have
	\begin{equation*}
	    2\min\left\{\delta(x,y),\delta(y,x)\right\} \leq \delta(x,y)+\delta(y,x)=(\gamma+1)\dH(x,y), 
	\end{equation*}
	which shows \ref{itm:three}. 
	To prove the last part of the statement, fix a code $C\subseteq\F$. Then \ref{itm:one} implies
	\begin{align*}
	    \dH(C) &= \min\{\dH(x,y)\mid x,y \in C\} \leq \min\{\delta(x,y)\mid x,y \in C\}=\delta(C),\\
	    \hat\delta(C)&=\min\{\hat\delta(x,y)\mid x,y \in C\} \leq \min\{\dH(x,y)\mid x,y \in C\}=\dH(C).
	\end{align*}
	Finally, let $x,y\in C$ such that $\dH(C)=\dH(x,y)$. By \ref{itm:three} we have
	\begin{equation*}
	    2\,\delta(C) \leq 2\,\min\{\delta(x,y),\delta(y,x)\}\leq (\gamma+1)\,\dH(x,y)=(\gamma+1)\,\dH(C),
	\end{equation*}
	which concludes the proof.
\end{proof}

\begin{remark}
The inequalities in the last part of Proposition~\ref{three} are typically not equalities. Take for example $p=0.1$ and $q=0.4$, from which $\gamma\approx 2.21$.
Let $C=\{(1,1,0,0), (1,0,1,1)\} \subseteq \FF^4$. We have $\hat\delta(C) \approx 1.79$, $\dH(C)=3$,
$\delta(C)\approx 4.21$ and $(\gamma+1)\dH(C)/2 \approx 4.81$. Therefore all the inequalities 
in Proposition~\ref{three} are strict for this code.
\end{remark}

The next result shows how bounds for codes with the Hamming metric translate into bound for codes endowed with discrepancy functions. In order to state the bounds in a general and compact form,
for $\Delta \in \R$ we let
\begin{align*}
    \mA(n,\Delta) &:=\max \{|C| \, : \, C \subseteq \F, \, |C| \ge 2, \, \delta(C) \ge \Delta\}, \\
    \hat\mA(n,\Delta) &:=\max \{|C| \, : \, C \subseteq \F, \, |C| \ge 2, \, \hat\delta(C) \ge \Delta\}, \\
    \mA^{\HH}(n,\Delta) &:=\max \{|C| \, : \, C \subseteq \F, \, |C| \ge 2, \, \dH(C) \ge \Delta\},
\end{align*}
where the maximum of the empty set is taken to be 1 (indicating a zero rate).
The following is an easy consequence of the last part of Proposition~\ref{three}.

\begin{proposition} \label{inh}
	For all $\Delta \in \R$ we have $$ \hat\mA(n,\Delta)\leq\mA^\HH(n,\Delta) \leq \mA(n,\Delta)\leq  \mA^\HH(n,2\Delta/(\gamma+1)).$$
\end{proposition}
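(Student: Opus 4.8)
The plan is to deduce all three inequalities in Proposition~\ref{inh} directly from the chain of inequalities $\hat\delta(C) \le \dH(C) \le \delta(C) \le \frac{\gamma+1}{2}\dH(C)$ established in the last part of Proposition~\ref{three}, by translating each relation between the minimum discrepancies of a \emph{single} code into a relation between the corresponding extremal quantities $\hat\mA$, $\mA^\HH$, $\mA$. The underlying principle I would use repeatedly is the following monotonicity observation: if $f(C) \le g(C)$ holds for every code $C$, then every code witnessing $g(C) \ge \Delta$ also satisfies $f(C) \ge$ (something weaker), and conversely a constraint of the form $f(C) \ge \Delta$ is \emph{harder} to satisfy than $g(C) \ge \Delta$, so the corresponding maximum cardinalities are ordered. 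I would state this comparison lemma once in words to avoid repeating the argument.

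For the first inequality $\hat\mA(n,\Delta) \le \mA^\HH(n,\Delta)$, I would argue as follows. Let $C$ be any code achieving $\hat\delta(C) \ge \Delta$ with $|C| = \hat\mA(n,\Delta)$ (if no such code with $|C| \ge 2$ exists, both sides collapse to the empty-set convention and the inequality is trivial). By Proposition~\ref{three} we have $\Delta \le \hat\delta(C) \le \dH(C)$, so this same code satisfies the Hamming constraint $\dH(C) \ge \Delta$ and is therefore admissible in the definition of $\mA^\HH(n,\Delta)$; hence $\mA^\HH(n,\Delta) \ge |C| = \hat\mA(n,\Delta)$. The middle inequality $\mA^\HH(n,\Delta) \le \mA(n,\Delta)$ follows identically: a code achieving $\dH(C) \ge \Delta$ satisfies $\Delta \le \dH(C) \le \delta(C)$ by Proposition~\ref{three}\ref{itm:one}, so it is admissible for $\mA(n,\Delta)$.

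The last inequality $\mA(n,\Delta) \le \mA^\HH(n, 2\Delta/(\gamma+1))$ is the one requiring a shift in the parameter. I would take a code $C$ with $\delta(C) \ge \Delta$ realizing $\mA(n,\Delta)$. From $\delta(C) \le \frac{\gamma+1}{2}\dH(C)$ in Proposition~\ref{three} I obtain $\dH(C) \ge \frac{2}{\gamma+1}\delta(C) \ge \frac{2\Delta}{\gamma+1}$, so $C$ is admissible in the definition of $\mA^\HH(n, 2\Delta/(\gamma+1))$, giving the bound. The only subtlety I anticipate is the degenerate bookkeeping: I must make sure that whenever one of the feasible sets of codes (with $|C| \ge 2$) is empty, the convention ``$\max \emptyset = 1$'' does not break the inequalities, and that the rescaled threshold $2\Delta/(\gamma+1)$ is handled for all real $\Delta$, including negative values where the Hamming constraint is vacuous. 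This case-checking is the main (though routine) obstacle; the mathematical content is entirely inherited from Proposition~\ref{three}, so no new estimates are needed.
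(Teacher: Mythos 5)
Your argument is correct and is exactly the route the paper intends: it states Proposition~\ref{inh} as an easy consequence of the chain $\hat\delta(C)\le \dH(C)\le \delta(C)\le \frac{\gamma+1}{2}\dH(C)$ from Proposition~\ref{three}, which is precisely the monotonicity/feasible-set-containment argument you spell out. Your handling of the $\max\emptyset=1$ convention is also sound, since containment of feasible sets makes the degenerate cases consistent.
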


\begin{remark}
Again, the inequalities in Proposition \ref{inh} are typically not equalities. For example, let $p=0.1$ and $q=0.4$, from which $\gamma\approx 2.21$. For $n=4$ and $\Delta=3$, we have $\hat\mA(4,3)=1$, $\mA^\HH(4,3)=2$, $\mA(4,3)=6$ and $\mA(4,6/(\gamma+1))=8$. Therefore all
the inequalities in Proposition~\ref{inh} are strict in this case.
\end{remark}

By combining the previous result with bounds from classical coding theory we obtain the following results.

\begin{corollary}
    Let $C\subseteq\F$ be a code. The following hold:
    \begin{enumerate}[label={(\arabic*)}]
    \setlength\itemsep{0.5em}
        \item $\displaystyle |C|\leq 2^{n-\left\lceil\frac{2\delta(C)}{\gamma+1}\right\rceil+1}$,
        \item $\displaystyle |C|\leq \frac{2^n}{\sum_{i=0}^t\binom{n}{i}}$, where $t$ is the largest integer with $t<\delta(C)/(\gamma+1)$,
        \item $\displaystyle  |C|\leq  \left\lfloor\frac{2d}{2d-n}\right\rfloor$, where $d= \lceil 2\delta(C)/(\gamma+1) \rceil$ and under the assumption that $2d>n$.
    \end{enumerate}
\end{corollary}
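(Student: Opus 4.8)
The plan is to derive all three bounds by reducing to classical estimates for binary codes in the Hamming metric, using Proposition~\ref{inh}. Taking $\Delta=\delta(C)$ in the rightmost inequality of Proposition~\ref{inh} gives $|C|\le\mA(n,\delta(C))\le\mA^\HH(n,2\delta(C)/(\gamma+1))$. Since the minimum Hamming distance of a code is a non-negative integer, the condition $\dH(C)\ge r$ is equivalent to $\dH(C)\ge\lceil r\rceil$, so $\mA^\HH(n,r)=\mA^\HH(n,\lceil r\rceil)$. Writing $d:=\lceil 2\delta(C)/(\gamma+1)\rceil$, this yields $|C|\le\mA^\HH(n,d)$, and it then suffices to bound $\mA^\HH(n,d)$—the maximal size of a binary code of length $n$ and minimum Hamming distance at least $d$—by three standard inequalities.

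For part (1) I would invoke the Singleton bound $\mA^\HH(n,d)\le 2^{n-d+1}$; substituting $d=\lceil 2\delta(C)/(\gamma+1)\rceil$ produces exactly the claimed estimate. For part (3) I would use the elementary Plotkin bound $\mA^\HH(n,d)\le 2d/(2d-n)$, which is valid precisely under the hypothesis $2d>n$; since $|C|$ is an integer this gives $|C|\le\lfloor 2d/(2d-n)\rfloor$. The Plotkin bound itself comes from double-counting the sum of all (ordered) pairwise Hamming distances in the code: this sum is at least $|C|(|C|-1)d$, while summing the per-coordinate contributions $2a_i(|C|-a_i)$ bounds it above by $n|C|^2/2$, and comparing the two estimates yields $|C|(2d-n)\le 2d$.

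For part (2) I would use the sphere-packing (Hamming) bound: if the Hamming balls of radius $t$ centred at the codewords are pairwise disjoint, then $|C|\sum_{i=0}^t\binom{n}{i}\le 2^n$. The one point that needs care—and the main (minor) obstacle—is to verify that the radius appearing in the statement, the largest integer $t$ with $t<\delta(C)/(\gamma+1)$, does guarantee disjointness. For this I would combine the last part of Proposition~\ref{three}, which gives $\dH(C)\ge 2\delta(C)/(\gamma+1)$, with the defining inequality $t<\delta(C)/(\gamma+1)$ to obtain $2t<2\delta(C)/(\gamma+1)\le\dH(C)$; since $2t$ and $\dH(C)$ are integers this forces $2t+1\le\dH(C)$, so the radius-$t$ balls around distinct codewords are disjoint and the sphere-packing bound applies. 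Equivalently, one may check the ceiling/floor identity $\lfloor(d-1)/2\rfloor=\lceil\delta(C)/(\gamma+1)\rceil-1$, which shows that this $t$ is exactly the packing radius associated with minimum distance $d$, so that part (2) is just the Hamming bound for $\mA^\HH(n,d)$.
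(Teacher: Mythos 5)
Your proposal is correct and follows essentially the same route as the paper: reduce to the Hamming-metric setting via $2\delta(C)/(\gamma+1)\le \dH(C)$ from Proposition~\ref{three} and then invoke the Singleton, Hamming, and Plotkin bounds. The paper's proof is just a two-line citation of these classical bounds, whereas you additionally verify the ceiling manipulations and the disjointness of the radius-$t$ balls, all of which check out.
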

\begin{proof}
    The three bounds follow from the fact that $2\delta(C)/(\gamma+1)\leq \dH(C)$ by Proposition~\ref{three} and from the Singleton, the Hamming, and the Plotkin bound (respectively). We refer the reader to~\cite{macwilliams1977theory} for the statements of these bounds.
\end{proof}

It is interesting to observe that there are two classes of codes  for which the parameter $\gamma$ does not play a major role when focusing on bounds. These are linear codes and constant-weight codes (i.e., codes whose codewords all have the same Hamming weight), for which the theory fully reduces to Hamming-metric codes.

\begin{proposition}
    Let $C\subseteq \F$ be a code. The following hold:
    \begin{enumerate}[label={(\arabic*)}]
        \item If $C$ is linear, then $\delta(C)=\dH(C)$,
        \item If $C$ is constant-weight, then $\delta(x,y)=\frac{(\gamma+1)}{2}\,\dH(x,y)$ for all $x,y\in C$. In particular, $\delta(C)=\frac{(\gamma+1)}{2}\dH(C)$.
    \end{enumerate}
\end{proposition}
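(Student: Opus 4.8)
The plan is to prove both parts by directly applying the structural identities from Lemma~\ref{lem:relations}, specializing them to the two cases in question. The key observation is that both statements are really about controlling the term $(\gamma-1)d_{10}(x,y)$, which by Lemma~\ref{lem:relations}\ref{gg} is precisely the ``gap'' between $\delta(x,y)$ and the symmetric Hamming distance $\dH(x,y)$.

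For part~(1), suppose $C$ is linear. First I would recall that linearity means $C$ is closed under the coordinatewise sum in $\F$, so for any $x,y \in C$ the vector $x+y$ also lies in $C$. The crucial fact is that for any pair of vectors, $\dH(x,y)=\wH(x+y)$ and, more to the point, $d_{10}(x,y)$ and $d_{01}(x,y)$ both equal some positive-count statistics of $x+y$ relative to the standard structure. Since $\delta(C)=\min_{x\neq y}\delta(x,y)$ and $\delta(x,y)=\dH(x,y)+(\gamma-1)d_{10}(x,y)$ by Lemma~\ref{lem:relations}\ref{gg}, one wants to show the minimum is attained where $d_{10}(x,y)=0$. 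For a linear code, every nonzero codeword $z=x+y$ arises, and I would argue that the minimum-discrepancy pair can be taken with $x$ of larger weight so that $d_{10}(x,y)=0$, or more cleanly, by translation-invariance one reduces to computing $\delta$ against the zero codeword $0\in C$: then $\delta(z,0)=\dH(z,0)=\wH(z)$ since $d_{10}(z,0)=0$ when comparing any vector to $0$. Hence $\delta(C)\le \dH(C)$, and combined with the reverse inequality $\dH(C)\le\delta(C)$ from Proposition~\ref{three}\ref{itm:one}, equality follows.

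For part~(2), suppose $C$ is constant-weight, say $\wH(x)=\wH(y)$ for all $x,y\in C$. Then by Lemma~\ref{lem:relations}\ref{dd} the equal-weight condition forces $d_{10}(y,x)=d_{10}(x,y)$, since the correction term $\wH(y)-\wH(x)$ vanishes. Symmetrically $d_{01}(y,x)=d_{01}(x,y)$. But the Hamming distance splits as $\dH(x,y)=d_{10}(x,y)+d_{01}(x,y)$, so these two equalities give $d_{10}(x,y)=d_{01}(x,y)=\dH(x,y)/2$. Substituting into $\delta(x,y)=d_{01}(x,y)+\gamma\,d_{10}(x,y)$ yields $\delta(x,y)=\tfrac{1}{2}\dH(x,y)+\tfrac{\gamma}{2}\dH(x,y)=\tfrac{\gamma+1}{2}\dH(x,y)$, as claimed. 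Taking the minimum over distinct pairs then gives $\delta(C)=\tfrac{\gamma+1}{2}\dH(C)$ immediately, since the scaling factor $\tfrac{\gamma+1}{2}$ is a fixed positive constant.

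The only step requiring genuine care is the reduction in part~(1): one must justify that minimizing $\delta$ over a linear code can be arranged to kill the $d_{10}$ term. I expect the cleanest route is the translation argument against $0\in C$, exploiting that $0$ is always a codeword of a linear code and that $\delta(z,0)=\wH(z)$ holds automatically because no coordinate of $0$ equals $1$. Part~(2) is purely computational and should present no obstacle once the equal-weight identities from Lemma~\ref{lem:relations} are invoked.
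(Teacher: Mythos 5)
Your part (2) is correct and essentially matches the paper's argument: the equal-weight hypothesis forces $d_{10}(x,y)=d_{01}(x,y)$, whence $\delta(x,y)=(\gamma+1)\,d_{01}(x,y)=\tfrac{\gamma+1}{2}\dH(x,y)$. One small point: from the two symmetry statements $d_{10}(y,x)=d_{10}(x,y)$ and $d_{01}(y,x)=d_{01}(x,y)$ alone you cannot conclude $d_{10}(x,y)=d_{01}(x,y)$; you also need the definitional identity $d_{10}(y,x)=d_{01}(x,y)$ (or, as the paper does, compare $\wH(x)=d_{10}(x,y)+d_{11}(x,y)$ with $\wH(y)=d_{01}(x,y)+d_{11}(x,y)$). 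That is a trivial addition, not a gap.

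Part (1) has the right strategy --- combine $\dH(C)\le\delta(C)$ from Proposition~\ref{three} with an explicit pair involving the zero codeword and a minimum-weight codeword --- but your key computation is wrong as written. You claim $\delta(z,0)=\wH(z)$ because ``$d_{10}(z,0)=0$ when comparing any vector to $0$''; in fact $d_{10}(z,0)=|\{i : z_i=1 \mbox{ and } 0=0\}|=\wH(z)$ while $d_{01}(z,0)=0$, so $\delta(z,0)=\gamma\,\wH(z)$, which is too large whenever $\gamma>1$. The pair you need is taken in the other order: $\delta(0,z)=d_{01}(0,z)=\wH(z)$, and choosing $z\in C$ nonzero of minimum weight (which equals $\dH(C)$ by linearity) gives $\delta(C)\le\delta(0,z)=\dH(C)$, exactly as in the paper. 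Relatedly, the appeal to ``translation-invariance'' is not available for $\delta$: for instance $\delta((1),(0))=\gamma\neq 1=\delta((0),(1))$, and translating by $(1)$ swaps these, so no reduction of a general pair to a pair through $0$ is valid. Linearity is used only to guarantee that $0\in C$ and that some codeword has Hamming weight $\dH(C)$; once the argument order is corrected, your proof coincides with the paper's.
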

\begin{proof}
     Suppose that $C$ is linear. By Proposition \ref{three} we have $\delta(C)\geq \dH(C)$. Thus it remains to show that $\delta(C)\leq \dH(C)$. Since $C$ is linear, there exists $x\in C$ with $\dH(C)=\wH(x)$. We then have
\begin{equation*}
	\dH(C)=\wH(x)=\dH(0,x)=d_{01}(0,x)=\delta(0,x),
\end{equation*}
    which implies $\delta(C)\leq \dH(C)$, as desired. 
        
    Now suppose that $C$ is constant-weight and let $x,y \in C$. One can easily check that
    \begin{equation*}
    	d_{01}(x,y)=\wH(y)-d_{11}(x,y)=\wH(x)-d_{11}(x,y)=d_{10}(x,y).
    \end{equation*}
    Therefore we have
    \begin{align*}
        \delta(x,y)&=\gamma d_{10}(x,y)+d_{01}(x,y)=(\gamma+1)\,d_{01}(x,y),\\
        \dH(x,y)&=d_{10}(x,y)+d_{01}(x,y)=2\,d_{01}(x,y),
    \end{align*}
    which imply $\delta(x,y)=\frac{(\gamma+1)}{2}\,\dH(x,y)$. Finally, we have
    \begin{align*}
        \delta(C) &=\min\{\delta(x,y)\mid x,y\in C, \, x\neq y\} \\ &=\min\left\{\frac{(\gamma+1)}{2}\,\dH(x,y)\mid x,y\in C, \, x\neq y\right\} \\ &= \frac{(\gamma+1)}{2}\,\dH(C),
    \end{align*}
    concluding the proof.
\end{proof}

In the remainder of the section we use two arguments to relate the cardinality of a code to its minimum (symmetric) discrepancy. Our bounds involve an invariant that is finer than the cardinality, namely the weight distribution; see Definition \ref{def:weightdist}. 

\begin{lemma}
\label{lem:S1S2}
    Let $C\subseteq\F$ be a code. For all $x \in C$ define
    \begin{align*}
		S_1(x)&:=\left\{y\in\F \, : \, \delta(y,x)<\frac{\delta(C)+(\gamma-1)(\wH(y)-\wH(x))}{2}\right\},\\
		S_2(x)&:=\left\{y\in\F \, : \, \delta(y,x)<\frac{\hat\delta(C)+\wH(y)(\gamma-1)}{2}\right\}.
	\end{align*}
    Then for all $x,x'\in C$ with $x\neq x'$ we have
    $S_1(x)\cap S_1(x')=\emptyset$ and 
    $S_2(x)\cap S_2(x')=\emptyset$.
\end{lemma}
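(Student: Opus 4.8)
The plan is to show that the sets $S_1(x)$ are pairwise disjoint by a standard ``decoding region'' argument. I would argue by contradiction: suppose $y \in S_1(x) \cap S_1(x')$ for some distinct $x,x' \in C$. By the definition of $S_1$, the vector $y$ satisfies the hypothesis of Proposition~\ref{prop:suff}\ref{item1:suff} relative to both $x$ and $x'$. Applying that proposition twice would force $D_C(y)=x$ and simultaneously $D_C(y)=x'$. Since $D_C$ is a well-defined function and $x\neq x'$, this is a contradiction, so no such $y$ can exist. The disjointness of the $S_2(x)$ follows in exactly the same way, using part~\ref{item2:suff} of Proposition~\ref{prop:suff} instead.

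Concretely, for the $S_1$ case I would write: fix $x,x'\in C$ with $x\neq x'$ and suppose toward a contradiction that some $y$ lies in $S_1(x)\cap S_1(x')$. The membership $y\in S_1(x)$ is precisely the condition
\begin{equation*}
\delta(y,x)<\frac{\delta(C)+(\gamma-1)(\wH(y)-\wH(x))}{2},
\end{equation*}
which is hypothesis~\ref{item1:suff} of Proposition~\ref{prop:suff}; hence $D_C(y)=x$. Likewise $y\in S_1(x')$ gives $D_C(y)=x'$. But then $x=D_C(y)=x'$, contradicting $x\neq x'$. The same template handles $S_2$, replacing the displayed inequality with $\delta(y,x)<\frac{\hat\delta(C)+\wH(y)(\gamma-1)}{2}$ and invoking part~\ref{item2:suff}.

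I do not expect any serious obstacle here, since the result is essentially a reformulation of Proposition~\ref{prop:suff}: the sets $S_1(x)$ and $S_2(x)$ are defined so that membership is exactly the sufficient condition for $D_C(y)=x$. The only point requiring a little care is that $D_C$ is genuinely a function (Definition~\ref{def:mld}), so it cannot return two distinct codewords on the same input; this is what converts ``$y$ decodes to $x$ and $y$ decodes to $x'$'' into an outright contradiction. Everything else is immediate matching of the defining inequalities against the hypotheses already established.
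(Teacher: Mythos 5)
Your proof is correct and is essentially identical to the paper's own argument: both proceed by contradiction, noting that a common element $y$ of $S_1(x)\cap S_1(x')$ would satisfy the hypothesis of Proposition~\ref{prop:suff}\ref{item1:suff} for both codewords, forcing $x=D_C(y)=x'$, with the $S_2$ case handled analogously via part~\ref{item2:suff}. No differences worth noting.
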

\begin{proof}
    Let $x,x'\in C$, with $x\neq x'$, and suppose towards a contradiction that there exists $y\in S_1(x)\cap S_1(x')$. Proposition \ref{prop:suff}\ref{item1:suff} implies that $x=D_C(y)=x'$, a contradiction. The proof that $S_2(x)\cap S_2(x')=\emptyset$ is analogous using Proposition~\ref{prop:suff}\ref{item2:suff}. 
\end{proof}

The next result provides two bounds relating the weight distribution of a code with its parameters $\delta(C)$ and $\hat\delta(C)$.

\begin{theorem}
\label{thm:hambound}
    Let $C\subseteq\F$ be a code. The following holds for all $i \in \{0, \ldots, n\}$:
	\begin{enumerate}[label={(\arabic*)}]
		\item \label{itm:hamone} $\displaystyle \sum_{j=0}^n \sum_{s\in \mathcal{S}\l\frac{\delta(C)+(\gamma-1)(i-j)}{2}\r}\lambda(i,j,s) \, \WH_j(C)\leq \binom{n}{i}$,
		\item $\displaystyle \sum_{j=0}^n \sum_{s\in \mathcal{S}\l\frac{\hat\delta(C)+(\gamma-1)i}{2}\r}\lambda(i,j,s) \, \WH_j(C)\leq \binom{n}{i}$.
	\end{enumerate}
\end{theorem}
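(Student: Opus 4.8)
The plan is to reinterpret both sides set-theoretically and then invoke the disjointness from Lemma~\ref{lem:S1S2}. I would fix $i \in \{0,\ldots,n\}$ and prove part~\ref{itm:hamone}; the second part is entirely analogous, with $S_2$ replacing $S_1$ throughout.

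First I would unpack the inner double sum. Fix a codeword $x \in C$ of Hamming weight $j$. By the definitions of $\mS(h)$ and of $\lambda$ (Notation~\ref{not_lamba}), and since $\delta(y,x)$ always lies in $\mS$ and is nonnegative, the quantity $\sum_{s \in \mS\left(\frac{\delta(C)+(\gamma-1)(i-j)}{2}\right)} \lambda(i,j,s)$ counts exactly those $y \in \F$ with $\wH(y)=i$ and $\delta(y,x) < \frac{\delta(C)+(\gamma-1)(i-j)}{2}$. Substituting $\wH(x)=j$ and $\wH(y)=i$, this threshold is $\frac{\delta(C)+(\gamma-1)(\wH(y)-\wH(x))}{2}$, so the count is precisely $|S_1(x) \cap \{y \in \F : \wH(y)=i\}|$.

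The second step is to repackage the left-hand side of~\ref{itm:hamone}. By Lemma~\ref{lem:isom} the numbers $\lambda(i,j,s)$ depend only on $j$, not on the particular weight-$j$ codeword, so weighting the above count by $\WH_j(C)$ and summing over $j$ amounts to summing over all codewords, yielding
\begin{equation*}
\sum_{j=0}^n \WH_j(C) \sum_{s \in \mS\left(\frac{\delta(C)+(\gamma-1)(i-j)}{2}\right)} \lambda(i,j,s) = \sum_{x \in C} \left| S_1(x) \cap \{y \in \F : \wH(y)=i\} \right|.
\end{equation*}
Now Lemma~\ref{lem:S1S2} gives $S_1(x) \cap S_1(x') = \emptyset$ for distinct $x,x' \in C$, hence the weight-$i$ slices $S_1(x) \cap \{y \in \F : \wH(y)=i\}$ are pairwise disjoint subsets of the set of weight-$i$ vectors in $\F$, which has cardinality $\binom{n}{i}$. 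The right-hand sum is therefore at most $\binom{n}{i}$, which is exactly part~\ref{itm:hamone}.

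The main obstacle is conceptual rather than computational: recognizing that the purely arithmetic double sum over $s$ and $j$ is in fact the cardinality of a disjoint union, and justifying via Lemma~\ref{lem:isom} that the contribution of each weight class is uniform across codewords of that weight. Once this translation is in place, the bound is immediate from the disjointness in Lemma~\ref{lem:S1S2}, with no further estimation required.
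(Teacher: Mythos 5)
Your argument is correct and follows essentially the same route as the paper's proof: the paper likewise defines the weight-$i$ slices $S_1(x,i)$ and $S_2(x,i)$, identifies their cardinalities with $\sum_{s}\lambda(i,j,s)$ via Lemma~\ref{lem:isom}, and concludes from the disjointness in Lemma~\ref{lem:S1S2} that their union has size at most $\binom{n}{i}$. No gaps to report.
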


\begin{proof}
    For $x\in C$ and $i \in \{0,\ldots,n\}$ define the sets
	\begin{align*}
		S_1(x,i)&:=\left\{y\in\F \, : \, \wH(y)=i \;\textup{ and }\;\delta(y,x)<\frac{\delta(C)+(\gamma-1)(i-\wH(x))}{2}\right\},\\
		S_2(x,i)&:=\left\{y\in\F \, : \, \wH(y)=i \;\textup{ and }\;\delta(y,x)<\frac{\hat\delta(C)+i(\gamma-1)}{2}\right\}.
	\end{align*}
	Lemma \ref{lem:S1S2} implies that $S_t(x,i)\cap S_t(x',i)=\emptyset$ for any $t\in\{1,2\}$ and all $x,x'\in C$ with $x\neq x'$. By Lemma \ref{lem:isom}, for all $x\in C$ of Hamming weight $j$ there exist 
	\begin{equation*}
		\lambda(i,j,s)=\Bin{j}{\frac{i\gamma -s+j}{\gamma+1}}\Bin{n-j}{\frac{s-j+i}{\gamma+1}}
	\end{equation*}
	vectors of $\F$ of Hamming weight $i$ and such that $\delta(y,x)=s$. This, along with the fact that the number of vectors of Hamming weight $i$ in $\F$ is $\binom{n}{i}$, implies
	\begin{align*}
	   \binom{n}{i}&\geq \left|\bigcup_{x\in C}S_1(x,i)\right|=\sum_{x\in C}|S_1(x,i)|=\sum_{j=0}^n \WH_j(C)\sum_{s\in \mathcal{S}\l\frac{\delta(C)+(\gamma-1)(i-j)}{2}\r}\lambda(i,j,s),\\
	   \binom{n}{i}&\geq \left|\bigcup_{x\in C}S_2(x,i)\right|=\sum_{x\in C}|S_2(x,i)|=\sum_{j=0}^n \WH_j(C)\sum_{s\in \mathcal{S}\l\frac{\delta(C)+(\gamma-1)i}{2}\r}\lambda(i,j,s),
	\end{align*}
	as desired.
\end{proof}

Theorem~\ref{thm:hambound} gives upper bounds on the cardinality of a code via integer linear programming. We illustrate this with an example.

\begin{example}
Let $p=0.1$ and $q=0.25$ which imply $\gamma\approx 1.57$. We wish to find the maximum cardinality of a code $C\subseteq\mathbb{F}_2^5$ with $\delta(C)=\gamma+2$. To this end, we solve an integer linear program in order to maximize $|C|$ under the constraints given by Theorem \ref{thm:hambound}\ref{itm:hamone}, for $0\leq i \leq 5$. For ease of notation, we write $\WH_j$ instead of $\WH_j(C)$ for $0\leq j\leq 5$.
We want to maximize the sum
    \begin{equation*}
        \Sigma= \WH_0+\WH_1+\WH_2+\WH_3+\WH_4+\WH_5
    \end{equation*}
    under the seven constraints:
    \begin{align*}
        \WH_0,\WH_1,\WH_2,\WH_3,\WH_4,\WH_5&\geq 0,\\
        \WH_0 + \WH_1 &\leq 1,\\
        5\WH_0 + \WH_1 + 2\WH_2 &\leq 5,\\
        4\WH_1 + \WH_2 + 3\WH_3 &\leq 10,\\
        3\WH_2 + \WH_3 + 4\WH_4 &\leq 10,\\
        2\WH_3 + \WH_4 + 5\WH_5 &\leq 5,\\
        \WH_4 + \WH_5 &\leq 1.
    \end{align*}
     One can check that the maximum value for $\Sigma$ is $4$, achieved for example by 
    \begin{align*}
        \WH_0=\WH_4&=1,\\
        \WH_1=\WH_2=\WH_5&=0,\\
        \WH_3&=2.
    \end{align*}
    The bound is in fact sharp and
    $C= \{(1,0,1,0,1),(0,0,0,1,1),(0,1,1,0,0),(1,1,0,1,0)\}$ is a code that meets it with equality.
\end{example}

Our next result is a constraint on the parameters of a code with given symmetric discrepancy. The proof is inspired by the classical Plotkin bound.

\begin{theorem}
\label{thm:lowPlotkin}
    Let $C\subseteq\F$ be a code. We have
    \begin{equation*}
        |C|\geq \left\lceil\frac{(\gamma+1)T(C)-\hat\delta(C)-n(\gamma-1)}{2n-\hat\delta(C)}\right\rceil,
    \end{equation*} 
    where $T(C)=\sum_{j=1}^nj\WH_j(C)$.
\end{theorem}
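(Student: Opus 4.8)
The plan is to mimic the double-counting at the heart of the classical Plotkin bound, but with the \emph{number of common ones} $d_{11}$ playing the role of the quantity that the minimum symmetric discrepancy controls. The first step is to record a weight-linear expression for $\hat\delta$. Combining Lemma~\ref{lem:relations}\ref{hh} with the definitional identity $\dH(x,y)=\wH(x)+\wH(y)-2\,d_{11}(x,y)$ (so that $d_{11}(x,y)=\tfrac12(\wH(x)+\wH(y)-\dH(x,y))$) yields
\[
\hat\delta(x,y)=\wH(x)+\wH(y)-(\gamma+1)\,d_{11}(x,y)\qquad(x,y\in\F).
\]
This is the form that makes a Plotkin-type column count possible: the weights enter linearly and the only pairwise interaction is through the overlap $d_{11}$.

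Next comes the double count. Since $\hat\delta(x,y)\ge\hat\delta(C)$ for every pair of distinct codewords, rearranging the identity and summing over the $\binom{|C|}{2}$ unordered pairs gives
\[
(\gamma+1)\sum_{\{x,y\}}d_{11}(x,y)\ \le\ (|C|-1)\,T(C)-\binom{|C|}{2}\hat\delta(C),
\]
using $\sum_{\{x,y\}}(\wH(x)+\wH(y))=(|C|-1)T(C)$. Writing $a_i$ for the number of codewords with a $1$ in coordinate $i$, one has $\sum_i a_i=T(C)$ and $\sum_{\{x,y\}}d_{11}(x,y)=\sum_i\binom{a_i}{2}$. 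The Plotkin convexity step is then Jensen's inequality for the convex function $\binom{\cdot}{2}$, namely $\sum_i\binom{a_i}{2}\ge n\binom{T(C)/n}{2}=\tfrac{1}{2n}\,T(C)\bigl(T(C)-n\bigr)$, which encodes that spreading the ones as evenly as possible across the $n$ columns is extremal.

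Substituting this bound and dividing by $T(C)/2$ converts the displayed inequality into
\[
(\gamma+1)\frac{T(C)-n}{n}\ \le\ 2\bigl(|C|-1\bigr)-\frac{|C|\,(|C|-1)\,\hat\delta(C)}{T(C)}.
\]
I would then linearize the last term with the trivial estimate $T(C)\le n\,|C|$: when $\hat\delta(C)\ge 0$ this gives $\tfrac{|C|(|C|-1)\hat\delta(C)}{T(C)}\ge\tfrac{(|C|-1)\hat\delta(C)}{n}$, and after multiplying by $n$ and collecting terms the quadratic contributions in $|C|$ cancel, leaving exactly
\[
|C|\,\bigl(2n-\hat\delta(C)\bigr)\ \ge\ (\gamma+1)T(C)-\hat\delta(C)-n(\gamma-1).
\]
The denominator $2n-\hat\delta(C)$ is positive because $\hat\delta(C)\le\dH(C)\le n$ by Proposition~\ref{three}, so dividing and taking the ceiling (as $|C|\in\N$) yields the stated bound.

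The hard part will be the final linearization. The estimate $T(C)\le n\,|C|$ points the right way only when $\hat\delta(C)\ge 0$, whereas $\hat\delta(C)$ may be negative (Remark~\ref{rem:d2leqd1}); the sign of $\hat\delta(C)$ must therefore be tracked carefully both when dividing by $T(C)$ and when bounding the overlap term, and the case $\hat\delta(C)<0$ requires a separate estimate. By contrast the convexity step is routine once the identity for $\hat\delta$ is in place, so the genuine work lies in arranging the inequalities so that the quadratic terms in $|C|$ and in $T(C)$ cancel to leave the clean linear bound.
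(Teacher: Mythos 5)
Your argument is, modulo bookkeeping, the same as the paper's. The paper evaluates $\sum_{x\in C}\sum_{y\in C}\delta(y,x)$ over ordered pairs, splits off $\wH(y)(\gamma-1)$ so as to invoke $\hat\delta(y,x)\ge\hat\delta(C)$, counts column-wise via $D_i(C)$, applies Cauchy--Schwarz to $\sum_i D_i(C)^2$ (your Jensen step on $\binom{a_i}{2}$ is the identical estimate, since $\sum_i\binom{a_i}{2}=\tfrac12(\sum_i a_i^2-T)$), and then linearizes using $n|C|\ge T(C)$. Your identity $\hat\delta(x,y)=\wH(x)+\wH(y)-(\gamma+1)d_{11}(x,y)$ and the count $\sum_{\{x,y\}}d_{11}(x,y)=\sum_i\binom{a_i}{2}$ are correct, and your chain does produce $(|C|-1)\bigl(2n-\hat\delta(C)\bigr)\ge(\gamma+1)\bigl(T(C)-n\bigr)$, which is equivalent to the stated bound --- but, as you yourself observe, only under the hypothesis $\hat\delta(C)\ge0$.

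The case you flag, $\hat\delta(C)<0$, is not a loose end that a ``separate estimate'' will close: it is precisely where the statement fails. Take $n=4$, let $C$ be the set of all four vectors of Hamming weight $3$, and choose $p,q$ with $\gamma>2$ (e.g.\ $p=0.05$, $q=0.45$, giving $\gamma\approx3.2$). Then $T(C)=12$, every pair of codewords has $\dH=2$ and $d_{11}=2$, so $\hat\delta(C)=4-2\gamma<0$, and the right-hand side of the theorem equals $\bigl\lceil(5\gamma+6)/(\gamma+2)\bigr\rceil=5>4=|C|$. The paper's own proof makes the same sign error silently: the step passing from $\hat\delta(C)\,M(M-1)$ to $\hat\delta(C)\,\tfrac{T}{n}(M-1)$, justified there ``because $nM\ge T$'', goes the wrong way exactly when $\hat\delta(C)<0$ (and on the example above it asserts $\Sigma\ge57.8$ while in fact $\Sigma=50.5$). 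So your instinct to track the sign is the crucial point, not a technicality; the honest conclusion is that your proof is complete and correct under the added hypothesis $\hat\delta(C)\ge0$, and that no argument can remove that hypothesis because the inequality itself is false without it.
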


\begin{remark}
\label{rem:plotkin}
    The bound in the previous theorem is trivial when $\gamma=1$ and $C$ is a linear nondegenerate code. Indeed, if $C$ is such a code, then  $\sum_{j=1}^nj\WH_j(C)=n2^{k-1}$,
    where $k$ is the dimension of $C$. Moreover, $\hat\delta(C)=\dH(C)$ and since $\dH(C)\leq n$ we have 
    \begin{equation*}
        \left\lceil\frac{n2^k-\dH(C)}{2n-\dH(C)}\right\rceil\leq \left\lceil\frac{n2^k-\dH(C)}{n}\right\rceil=\left\lceil2^k-\frac{\dH(C)}{n}\right\rceil\leq 2^k= |C|.
    \end{equation*}
\end{remark}

\begin{proof}[Proof of Theorem \ref{thm:lowPlotkin}]
    We evaluate in two different ways the sum
    \begin{equation*}
        \Sigma:=\sum_{x\in C}\sum_{y\in C}\delta(y,x).
    \end{equation*}
    For ease of notation we denote by $M$ the cardinality of $C$ and write $\hat\delta$, $T$ and $\WH_j$ instead of~$\hat\delta(C)$, $T(C)$ and $\WH_j(C)$ respectively. We have
    \begin{align}
    \label{uu1}
        \Sigma &= \sum_{x\in C}\sum_{\tiny\begin{matrix}y\in C\\y\neq x\end{matrix}}\l\delta(y,x)-\wH(y)(\gamma-1)\r+\sum_{x\in C}\sum_{\tiny\begin{matrix}y\in C\\y\neq x\end{matrix}}\wH(y)(\gamma-1) \nonumber \\
        &\geq \sum_{x\in C}\sum_{\tiny\begin{matrix}y\in C\\y\neq x\end{matrix}}\hat\delta+\sum_{x\in C}\sum_{\tiny\begin{matrix}y\in C\\y\neq x\end{matrix}}\wH(y)(\gamma-1) \nonumber \\
        &= \hat\delta\, M(M-1)+(\gamma-1)(M-1)T \nonumber\\
        &\geq \hat\delta\,\frac{T}{n}(M-1)+(\gamma-1)(M-1)T, 
    \end{align}
    where the latter inequality follows from the fact that $nM\geq T$.
    On the other hand we have
    \begin{align}
     \Sigma &= \sum_{i=1}^n\sum_{x\in C}\sum_{y\in C}\delta(y_i,x_i) \nonumber \\
     &= \sum_{i=1}^n\sum_{\substack{x\in C\\x_i=1}}\sum_{\substack{y\in C\\y_i=0}}\delta(0,1)+\sum_{i=1}^n\sum_{\substack{x\in C\\x_i=0}}\sum_{\substack{y\in C\\y_i=1}}\delta(1,0) \nonumber \\
     &= \sum_{i=1}^n\sum_{\substack{x\in C\\x_i=1}}\sum_{\substack{y\in C\\y_i=0}}1+\sum_{i=1}^n\sum_{\substack{x\in C\\x_i=0}}\sum_{\substack{y\in C\\y_i=1}}\gamma. \label{last}
     \end{align}
Write $D_i(C):=|\{x\in C \, : \,  x_i=1\}|$
for $i \in \{1,\ldots,n\}$.
Then \eqref{last} becomes
$$\Sigma = (1+\gamma)\sum_{i=1}^n(M-D_i(C))D_i(C) = (1+\gamma)\l M\sum_{i=1}^n D_i(C)-\sum_{i=1}^n D_i(C)^2\r.$$
Observe moreover that
  \begin{equation*}        \sum_{i=1}^nD_i(C)=|\{(x,i) \, : \, x\in C, \, 1\leq i\leq n,\, x_i=1\}|=\sum_{x\in C}|\{i \, : \, 1 \leq i\leq n, \, x_i=1\}|=T.
    \end{equation*}
We now apply the Cauchy-Schwarz inequality to the vectors $(D_1(C),\ldots,D_n(C))$ and $(1,\ldots,1)$, obtaining
    \begin{equation*}
        n\l\sum_{i=1}^n D_i(C)^2\r\geq T^2.
    \end{equation*}
We thus have
    \begin{align}
    \label{eq:Plotup}
     \Sigma \leq(1+\gamma)T\l M-\frac{T}{n}\r.
    \end{align}
    The desired statement now follows by combining~\eqref{uu1} with~\eqref{eq:Plotup}, after tedious computations.
\end{proof}

We provide an example of a code 
that meets the bound of 
Theorem~\ref{thm:lowPlotkin}
with equality.

\begin{example}
    Let $p=0.1$, $q=0.4$, for which we have $\gamma\approx 2.21$. Consider the code $C=\{(1,1,1,1),(1,0,1,1),(1,0,1,0)\}$. We have $T(C)=9$ and $\hat\delta(C)=-2.63$.
    Therefore 
    \begin{equation*}
        \left\lceil\frac{(\gamma+1)T(C)-\hat\delta(C)-n(\gamma-1)}{2n-\hat\delta(C)}\right\rceil= \left\lceil\frac{3.21\cdot 9+2.63-4\cdot 1.21}{8+2.63}\right\rceil=\left\lceil 2.51\right\rceil=3=|C|.
    \end{equation*}
    In particular $C$  meets the bound of Theorem~\ref{thm:lowPlotkin} with equality.
\end{example}

\bigskip
\section{Conclusions}
In this paper, we used two notions of discrepancy between binary vectors to define new parameters of codes for the binary asymmetric channel. We then showed how these parameters measure the probability that the maximum likelihood decoder fails, and related them with more classical code parameters (such as the minimum Hamming distance). Finally, we derived bounds for the size of a code in terms of these parameters, giving examples of codes meeting the bounds with equality.

A natural open question is how to construct families of codes for the binary asymmetric channel having large cardinality and large minimum discrepancy simultaneously.

\bigskip
\section*{Acknowledgement}
The authors are grateful to the Referees of this paper for their very helpful suggestions and remarks.

\bigskip

\bigskip

\bibliographystyle{amsplain} 
\bibliography{BACbiblio.bib}
\end{document}